\def\reals{{\mathbb R}}
\newcommand{\frechet}{Fr\'echet}
\newcommand{\dfd}{d_{dF}}
\begin{document}
\newtheorem{theorem}{Theorem}[section]
\newtheorem{lemma}[theorem]{Lemma}
\newenvironment{proof}[1][Proof]{\begin{trivlist}
\item[\hskip \labelsep {\bfseries #1}]}{\end{trivlist}}
\newenvironment{definition}[1][Definition]{\begin{trivlist}
\item[\hskip \labelsep {\bfseries #1}]}{\end{trivlist}}
\newenvironment{example}[1][Example]{\begin{trivlist}
\item[\hskip \labelsep {\bfseries #1}]}{\end{trivlist}}
\newenvironment{remark}[1][Remark]{\begin{trivlist}
\item[\hskip \labelsep {\bfseries #1}]}{\end{trivlist}}

\newcommand{\qed}{\nobreak \ifvmode \relax \else
      \ifdim\lastskip<1.5em \hskip-\lastskip
      \hskip1.5em plus0em minus0.5em \fi \nobreak
      \vrule height0.75em width0.5em depth0.25em\fi}

\title{Complexity and Algorithms for the Discrete Fr\'{e}chet Distance Upper Bound with Imprecise Input}

\author{Chenglin Fan, Binhai Zhu \\
 Department of Computer Science\\
 Montana State
University,
Bozeman, MT~59717, USA\\
chenglin.fan@msu.montana.edu, bhz@cs.montana.edu}

\maketitle
\begin{abstract}
We study the problem of computing the upper bound of the discrete
Fr\'{e}chet distance for imprecise input, and prove that the problem
is NP-hard. This solves an open problem posed in 2010 by Ahn
\emph{et al}. If shortcuts are allowed, we show that the upper bound
of the discrete Fr\'{e}chet distance with shortcuts for imprecise
input can be computed in polynomial time and we present several
efficient algorithms.
\end{abstract}
\section{Introduction}


The Fr\'{e}chet distance is a natural measure of similarity between
two curves \cite{AltG95}. The Fr\'{e}chet distance between two
curves is often referred to as the ``dog-leash distance''. Imagine a
dog and its handler are walking on their respective curves,
connected by a leash, and they both can control their speed but
cannot walk back. The Fr\'{e}chet distance of these two curves is
the minimum length of any leash necessary for the dog and the
handler to move from their starting points on the two curves to
their respective endpoints. Alt and Godau~\cite{AltG95} presented an
algorithm to compute the Fr\'{e}chet distance between two polygonal
curves of $n$ and $m$ vertices in $O(nm\log(nm))$ time. There has
been a lot of applications using the Fr\'{e}chet distance to do
pattern/curve matching. For instance, Fr\'{e}chet distance has been
extended to
 graphs (maps)~\cite{AltERW03,BPSW05}, to piecewise smooth
curves~\cite{Rote07}, to simple polygons~\cite{BBW08}, to
surfaces~\cite{AltB05}, to network distance~\cite{chenglin2011}, and
to the case when there is a speed limit \cite{Maheshwari2011}, etc.

On the other hand, Fr\'{e}chet distance is sensitive to local
errors, a small local change  could change the Fr\'{e}chet distance
greatly. In order to handle this kind of outliers, Driemel and
Har-Peled \cite{Driemel2013} introduced the Fr\'{e}chet distance
with shortcuts.

A slightly simpler version of the Fr\'{e}chet distance is the
\emph{discrete Fr\'{e}chet distance}, where only the vertices of
polygonal curves are considered. In terms of using a symmetric
example, we could imagine that two frogs, connected by a thread, hop
on two polygonal chains and each can hop from a vertex to the next
or wait, but can never hop back. Then, the discrete \frechet\
distance is the minimum length thread for the two frogs to reach the
ends of their respective chains. When we add a lot of points
(vertices) evenly on two polygonal chains, the discrete \frechet\
distance gives a natural approximation for the (continuous)
\frechet\ distance. The discrete \frechet\ distance is more suitable
for some applications, like protein structure alignment
\cite{JiangXZ08,Zhu07}, in which case each vertex represents the
$\alpha$-carbon atom of an amino acid. In this case, using the
(continuous) \frechet\ distance would produce some result which is
not biologically meaningful. In this paper, we focus  on the
discrete \frechet\ distance.

It takes $O(mn)$ time to compute the discrete \frechet\ distance
using a standard dynamic programming
technique~\cite{Wien94computingdiscrete}. Recently, this bound was
slightly improved \cite{AgarwalAKS14}. Most of the important
applications regarding the discrete \frechet\ distance are
biology-related \cite{JiangXZ08,Zhu07,Wylie2013,Chenglin15}. Some of
the other applications using the discrete \frechet\ distance just
study the corresponding problem using the (continuous) \frechet\
distance. For instance, given a polygonal curve $P$ and set of
points $S$, Maheshwari {\em et al.} studied the problem of computing
a polygonal curve through $S$ which has a minimum Fr\'{e}chet
distance to $P$ \cite{MaheshwariSSZ11}. The corresponding problem
using the discrete \frechet\ distance is studied in
\cite{Wylie2014}.

It is worth mentioning that, symmetric to the \frechet\ distance
with shortcuts \cite{Driemel2013}, the discrete \frechet\ distance
with shortcuts was also studied by Avraham {\em et al.} recently
\cite{Avraham2014}. A novel technique, based on distance selection,
was designed to compute the discrete \frechet\ distance with
shortcuts efficiently. In Section 4, we will also use the discrete
\frechet\ distance with shortcuts to compute the corresponding upper
bounds for imprecise input.

The  computational geometry with imprecise objects has drawn much
interest to researchers since a few years ago. There are two models:
one is the {\em continuous} model, where a precise point is selected
from an erroneous region (say a disk, or rectangle)
\cite{loffler2006}; the other is the {\em discrete} or {\em
color-spanning} model, where a precise point is selected from
several discrete objects with the same color and all colors must be
selected \cite{Abellanas2001}. We will mainly focus on the
continuous model, but will also touch the color-spanning model.
(There is another {\em probabilistic} model, which is not relevant
to this paper. Hence, we will skip that one.) A lot of algorithms
have been designed to handle imprecise geometric problems on both
models. For the continuous model, there are algorithms to handle
imprecise data for computing the Hausdorff
distance~\cite{loffler2009}, Voronoi diagram~\cite{Evans2008},
planar convex hulls~\cite{loffler2006,wenqi2012} and Delaunay
triangulations~\cite{loffler2010,Khanban2003}.

Ahn {\em et al}. studied the  problem of computing the discrete
Fr\'{e}chet distance between two imprecise point sequences, and gave
an efficient algorithm for computing the lower bound (of the
distance) and efficient approximation algorithms for the
corresponding upper bound (under a realistic assumption)
\cite{AhnIsaac2010,Ahn2012}. It is unknown whether computing the
discrete \frechet\ distance upper bound for imprecise input is
polynomially solvable or not, so Ahn {\em et al.} left that as an
open problem \cite{AhnIsaac2010,Ahn2012}. In this paper, we proved
that the problem is in fact NP-hard. We also consider the same
problem under the discrete \frechet\ distance with shortcuts and
give efficient polynomial-time solutions.

The paper is organized as follows. In Section 2, we give the
necessary definitions. In Section 3, we prove that the discrete
\frechet\ distance upper bound for imprecise input is NP-hard, which
is separated into several subsections due to the difficulty. In
Section 4, we consider the problem of computing the discrete
\frechet\ distance upper bound for imprecise input. In Section 5, we
conclude the paper.

\section{Preliminaries}

Throughout this paper, we use $d(a,b)$ for the Euclidean distance
between points $a$ and $b$, possibly in $\reals^k$, where $k$ is any
positive integer.

We first define the discrete \frechet\ distance as follows. Let
$A=(a_1\ldots,a_n)$ and $B=(b_1,\ldots,b_m)$ be two sequences of points
of size $n$ and $m$ respectively, in $\reals^k$. The discrete
\frechet\ distance $\dfd(A,B)$ between $A$ and $B$ is defined using
the following graph. Given a distance $\delta > 0$ and consider the
Cartesian product $A \times B$ as the vertex set of a directed graph
$G_\delta$ whose edge set is
\begin{align*}
    E_\delta = & \big\{ \big((a_i, b_j), (a_{i+1}, b_j)\big) \; | \; d(a_i, b_j), d(a_{i+1}, b_j) \le \delta \big\} \; \cup\\
    & \big\{ \big((a_i, b_j), (a_i, b_{j+1})\big) \; | \; d(a_i, b_j), d(a_i, b_{j+1}) \le \delta \big\} \; \cup\\
    & \big\{ \big((a_i, b_j), (a_{i+1}, b_{j+1})\big) \; | \; d(a_i, b_j), d(a_{i+1}, b_{j+1}) \le \delta \big\}\, .
\end{align*}
Then, $\dfd(A,B)$ is the smallest $\delta > 0$ for which $(a_n,
b_m)$ can be reached from $(a_1, b_1)$ in the graph $G_\delta$.

\begin{definition} For a region
$q_i$, a precise point $a_i$ is called a realization of $q_i$ if
$a_i\in q_i$; For a region sequences $Q=(q_1,q_2,...,q_n)$, the
precise point sequence $A=(a_1,a_2,...,a_n)$ is called a
\emph{realization} of $Q$ if we have $a_i \in q_i$ for all $1\leq
i\leq n$.
\end{definition}

For the discrete \frechet\ distance of imprecise input, we use the
same notions such that the \emph{realization} of an imprecise input
sequence as in \cite{AhnIsaac2010}. To be consistent with these
notations, we also use $F(A,B)$ to denote the discrete Fr\'{e}chet
distance between $A$ and $B$ (i.e., $F(A,B)=\dfd(A,B)$).

\begin{definition} 
For two region sequences $Q=(q_1,q_2,...,q_n)$ and
$H=(h_1,h_2,...,h_m)$, $A=(a_1,a_2,...,a_n)$ (resp.
$B=(b_1,b_2,...,b_m)$) is a possible realization of $H$ (resp. $Q$)
if we have $a_i \in q_i,b_j\in h_j$ for all $1\leq i\leq n, 1\leq
j\leq m$. The Fr\'{e}chet distance upper bound $F^{\max}(Q,H)=\max
\{F(A,B)\}$, where $A$ (resp. $B$) is a possible realization of $Q$
(resp. $H$).
\end{definition}

We comment that for region (or imprecise vertex) sequences, to
obtain decent algorithmic bounds, we mainly focus on the regions as
balls (disks in 2d) in Section 4. (Though with some extra twist, it
might be possible to handle square or rectangular regions as well.)
But in the proof of NP-hardness, the imprecise regions are
rectangles in Section 3.

We show in the next section that computing $F^{\max}(Q,H)$ is
NP-hard, which was an open problem posed by Ahn {\em et al.} in
\cite{AhnIsaac2010}.

\section{Computing the discrete Fr\'{e}chet distance upper bound of imprecise input is NP-hard}

In this section, we prove that deciding $F^{\max}(Q,H)\leq \epsilon$
is NP-hard. In fact, this holds even when $H$ is a precise vertex
sequence, and $Q$ is an imprecise vertex sequence (where each vertex
is modeled as a rectangle, not necessarily axis aligned). As the proof is quite
complex, we separate it in several parts.

\subsection{NP-hardness of an induced  subgraph connectivity problem of colored sets}

Firstly, we prove that another induced  subgraph connectivity
problem of colored sets is NP-hard, which is useful for the proof of
deciding $F^{\max}(Q,H)\leq \epsilon$.  We define the induced
subgraph connectivity problem of colored sets (ISCPCS) as follows:
let $G$ be the graph with $n$ vertices  and each vertex is colored
by one of the $m$ colors in the plane, a fixed source vertex $s$, a
fixed destination vertex $t$, and some directed edges between the
vertices (where no two edges cross), choose an induced subgraph
$G_s$ consisting of exactly one vertex of each color such that in
$G_s$ there is no path from $s$ to $t$. For an example, see
Figure~\ref{fig:Np_proof}. We prove that the ISCPCS problem is
NP-hard by a reduction from 3SAT.

\begin{figure}
 \centering
 \includegraphics[width=0.6\textwidth]{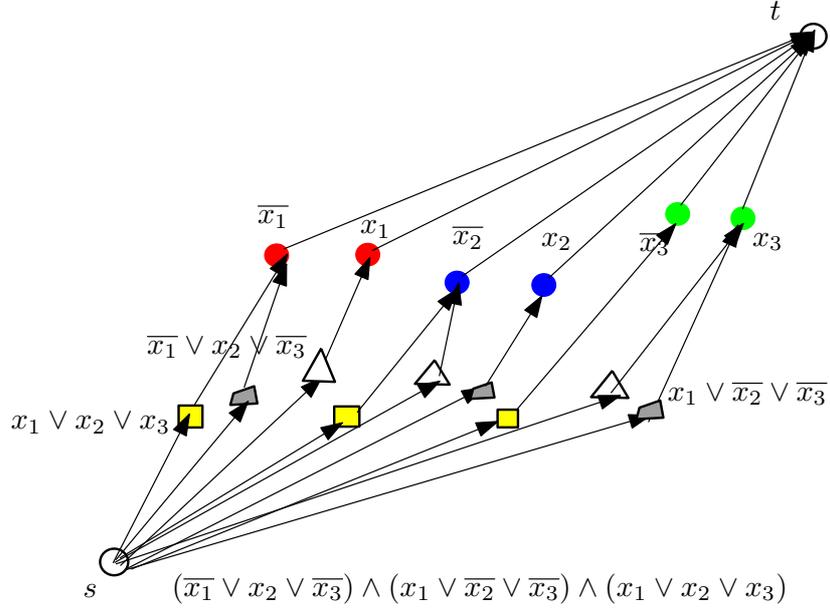}
 \caption{Illustration of the constructed directed colored graph $G$ from 3SAT. We also use different shapes for different clause vertices.}
  \label{fig:Np_proof}
\end{figure}

\begin{lemma}
ISCPCS is NP-hard.
\end{lemma}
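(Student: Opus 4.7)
The plan is to reduce from 3SAT. Let $\phi$ be a 3SAT formula with variables $x_1, \ldots, x_n$ and clauses $C_1, \ldots, C_m$. I will construct a directed plane graph $G$, colored with designated source $s$ and sink $t$, so that $\phi$ is satisfiable iff there is a choice $H$ of exactly one vertex per color whose induced subgraph disconnects $s$ from $t$.

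\emph{Variable gadget.} For each variable $x_i$, introduce a color $c_i$ containing exactly two vertices $v_i^T$ and $v_i^F$. Because $H$ must contain exactly one vertex of color $c_i$, the choice naturally encodes a truth assignment $\alpha$: $v_i^T \in H$ means $x_i$ is true, and $v_i^F \in H$ means $x_i$ is false.

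\emph{Clause gadget.} For each clause $C_j = \ell_{j,1} \vee \ell_{j,2} \vee \ell_{j,3}$, let $u_{j,k}$ be the \emph{falsifier} of literal $\ell_{j,k}$: $u_{j,k} = v_i^F$ when $\ell_{j,k} = x_i$, and $u_{j,k} = v_i^T$ when $\ell_{j,k} = \neg x_i$. Thus $u_{j,k} \in H$ iff $\ell_{j,k}$ is false under $\alpha$. I introduce four clause-specific auxiliary vertices $a_{j,0},a_{j,1},a_{j,2},a_{j,3}$, each given its own singleton color so that it is forced into $H$, and install the directed path
\[
s \to a_{j,0} \to u_{j,1} \to a_{j,1} \to u_{j,2} \to a_{j,2} \to u_{j,3} \to a_{j,3} \to t.
\]
The entire path lies in the induced subgraph iff all three $u_{j,k}$ are in $H$, i.e., iff every literal of $C_j$ is false, i.e., iff $C_j$ is unsatisfied by $\alpha$. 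For the correctness argument, if $\alpha$ satisfies $\phi$ then for every $C_j$ at least one $u_{j,k}\notin H$, so each clause path is broken and $t$ is unreachable from $s$; conversely, any valid $H$ defines an assignment via the variable gadgets, and an unsatisfied clause leaves its whole path inside $H$, giving an $s$-to-$t$ walk.

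\emph{Shortcuts and planarity.} The delicate point is that the same variable vertex $v_i^T$ or $v_i^F$ typically serves as a falsifier in several clauses, so a walk in the induced subgraph could jump from the clause-$j$ path onto the clause-$j'$ path through a shared $u_{j,k}=u_{j',k'}$. I would eliminate this by wrapping each appearance of $u_{j,k}$ with a pair of clause-local "in/out" gate vertices whose colors are shared in a controlled two-to-one manner, so that choosing the in-gate of one clause forbids choosing the matching out-gate of another, restricting any $s$-$t$ walk to remain inside a single clause path. For the planarity requirement ("no two edges cross"), I would start from the planar 3SAT variant, place the variable gadgets along a horizontal line, and route each clause path in the half-plane prescribed by the planar embedding of $\phi$, using the auxiliary vertices as bending points so that the drawing is non-crossing.

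The main obstacle is precisely the interaction between these two constraints: the sharing of variable vertices across clauses naturally creates shortcuts, while the fix (local isolator/gate gadgets) must simultaneously respect the one-vertex-per-color rule, avoid introducing any new $s$-$t$ path when the clause is satisfied, and admit a planar embedding that is compatible with the planar 3SAT layout. Getting a constant-size gadget that achieves all three properties is the crux of the proof.
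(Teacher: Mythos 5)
Your core construction does not work as stated, and the part you defer (``the crux'') is exactly what is missing. In your clause gadget the falsifier vertices $u_{j,k}$ are shared across clauses, so the induced subgraph can contain $s$--$t$ walks that mix pieces of different clause paths even when every clause is satisfied. Concretely: suppose literal $\ell$ is false under the chosen assignment, $\ell$ occurs as the first literal of clause $C_j$ and as the third literal of clause $C_{j'}$, and both $C_j$ and $C_{j'}$ are satisfied by their other literals. Then $u=u_{j,1}=u_{j',3}$ is in $H$, and the walk $s \to a_{j,0} \to u \to a_{j',3} \to t$ lies entirely in the induced subgraph, so a satisfying assignment need not disconnect $s$ from $t$; the forward direction of your reduction fails. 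You acknowledge this and sketch a fix via ``in/out gate'' vertices with colors shared two-to-one, but you never specify the gadget or verify that it respects the one-vertex-per-color rule, creates no new paths, and stays planar --- so the proof has a genuine hole precisely where the difficulty lies.

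The paper's reduction avoids this problem structurally rather than patching it. It does not chain the three falsifiers of a clause into a long path (a conjunction); instead each clause $C_j$ gets \emph{three} vertices of one fresh color, one per literal, and each such clause vertex has a single outgoing edge to the falsifier variable vertex of its own literal, with $s$ feeding all clause vertices and all variable vertices feeding $t$. Every $s$--$t$ path then has length three and passes through exactly one clause vertex and its unique falsifier, so no cross-clause shortcuts can arise. The disjunction of the clause is encoded by the ``exactly one vertex per color'' rule on the clause color: if some literal is true one selects that clause vertex, whose falsifier is unselected, killing all paths through that clause; if the clause is falsified, every falsifier is selected and any choice of clause vertex yields a path. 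Planarity is then immediate from a direct layout (clause vertices placed below their variable vertices), with no need for Planar 3SAT. If you want to salvage your version, you would either have to give the gate gadget in full and prove the three properties you list, or simply switch to making the per-clause choice a colored choice among three single-edge clause vertices as the paper does.
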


The detailed proof is in the appendix, an example is given in
Figure~\ref{fig:Np_proof}.

\subsection{The free space diagram}

\begin{figure}
 \centering
 \includegraphics{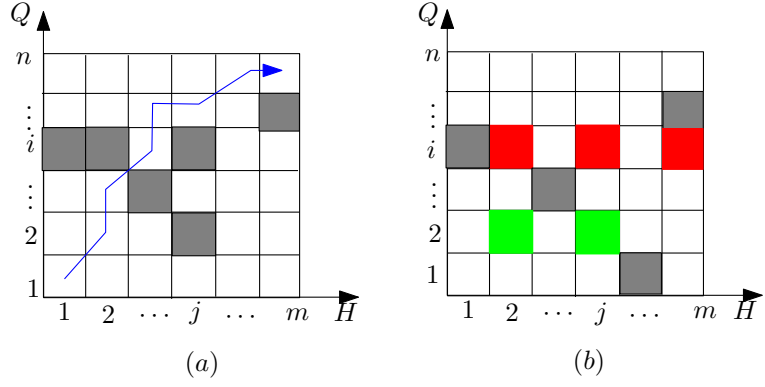}
 \caption{Illustration of the free space diagram of discrete Fr\'{e}chet distance with precise input (a);
  and, the free space diagram of discrete Fr\'{e}chet distance with imprecise input (b).}
  \label{fig:de_space}
\end{figure}

The free space diagram of the  discrete Fr\'{e}chet distance between
a realization of $Q,H$ is composed of a grid of $n\times m$ cells,
where $n$ and $m$ are the number of vertices in $Q$ and $H$
respectively. We first consider the case when both $Q,H$ are
precise. In this case, let $q_i$ and $h_j$ denote the $i$-th and
$j$-th vertex of $Q,H$ respectively. Each pair $(q_i,h_j)$
corresponds to the cell in the $i$-th row and the $j$-th column.
From the definition of the discrete \frechet\ distance, it
corresponds to a monotone path in the grid from cell (1,1) to
$(n,m)$. In the sequel, for the ease of description, we sometimes
loosely call such a path ``a monotone path". We cover the details
regarding such a path next.

Cell $C[i,j]=(q_i,h_j)$ is painted {\em white} if $d(q_i,h_j)\leq
\epsilon$,
which indicates that this cell can be passed by a potential monotone
path. Cell $C[i,j]=(q_i,h_j)$ is painted {\em gray} if $d(q_i,h_j)>
\epsilon$, which indicates that this grid cannot be passed by any
monotone path. Each cell $C[i,j]$ could reach its monotone
neighboring cell $C[i,j+1],C[i+1,j]$ or $C[i+1,j+1]$ if both of them
are painted white. The discrete \frechet\ distance is the minimum
$\epsilon$ such that there is a path from cell (1,1) to $(n,m)$ and
the path is monotone in both horizontal and vertical directions.
See Figure~\ref{fig:de_space} (a) for an example.

Now, we consider the free space diagram when $H=(h_1,h_2,...,h_{m})$
is a precise vertex sequence and $Q=(q_1,q_2,...,q_{n})$ is an
imprecise region sequence. There are several cases below.

\begin{itemize}
\item (1) If $d(q,h_j)\leq \epsilon, \forall q\in q_i $, then the cell
$C[i,j]$ is painted white and could be passed.
\item (2) If $d(q,h_j)> \epsilon, \forall q\in q_i $, then the cell $C[i,j]$
is painted gray and cannot be passed.
\item (3-a) There are two vertices $h_i, h_j$ and an imprecise vertex
$q_k$ satisfying either $d(q,h_i)\leq \epsilon$ or $d(q,h_j)\leq
\epsilon, \forall q\in q_k$, see
Figure~\ref{fig:spread_variable}(a). Then, we paint the cell
$C[k,i],C[k,j]$ with the same color, which show that either $C[k,i]$
or $C[k,j]$ can be passed, see Figure~\ref{fig:de_space} (b). This
case will be designed as a variable gadget.
\item (3-b) There are three vertices $h_i,h_j,h_k$ and an imprecise vertex
$q_x$ satisfying\\
$d(q,h_i)\leq \epsilon $, or $d(q,h_j)\leq \epsilon$ or
$d(q,h_k)\leq \epsilon, \forall q\in q_x$, see
Figure~\ref{fig:spread_variable}(b). Then we paint the cell
$C[x,i],C[x,j]$ and $C[x,k]$ with the same color.  This case can be
designed as a clause gadget. Of course, it is possible that more
than one of the cells $C[x,i],C[x,j],C[x,k]$ might be passed at the
same time. But our objective is to make the discrete Fr\'{e}chet
distance as large as possible when only one of them is passed.
\end{itemize}

In fact, there could be more complicated cases than the three cases
above, but we do not need them in our construction.

\begin{figure}
 \centering
 \includegraphics{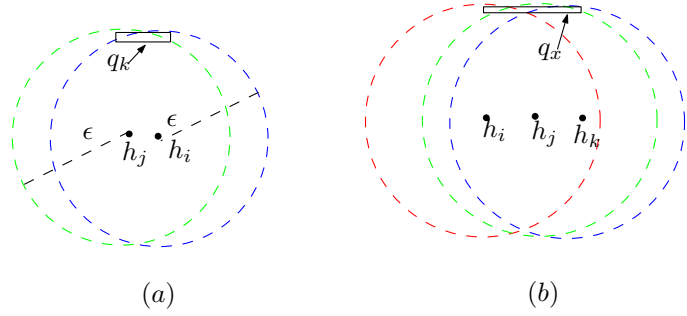}
 \caption{Illustration of the variable gadget and clause gadget.}
  \label{fig:spread_variable}
\end{figure}


\subsection{The grid graph for the color-spanning set}

The free space diagram of the discrete Fr\'{e}chet distance is
really a directed grid graph. Now we show how to convert the ISCPCS
instance, e.g., in Figure~\ref{fig:Np_proof}, into a grid graph. The
basic steps are as follows: the grid has $n+m+3$ rows and $3m+2n$
columns, and the colored cells in the grid correspond to the colored
vertices in ISCPCS. The details are step by step as follows.

\begin{enumerate}
\item For the first row (from bottom up), all the cells are
painted white, which means that all cells can be passed. The
motivation is to make the starting cell (the lower-left cell
$C[1,1]$, which corresponds to the start node $s$ in ISCPCS) in the
grid graph reachable to all the colored clause cells (which
correspond to clause vertices in ISCPCS).

\item From the 2nd row to the $(m+1)$-th row ($m$ is the number of clauses
in the 3SAT instance from which the ISCPCS instance is constructed),
each row has three cells with the same color. We call them {\em
clause cells}, corresponding to the three clause vertices of the
same color in ISCPCS. Each column has at most one clause cell. If
there is a clause cell $C(k,j)$ in the $j$-th column, then the cells
$C[i,j], 2\leq i\leq n+1, i\neq k$, are painted white and could be
passed. If there is no clause cell in the $j$-th column, then cells
$C[i,j], 2\leq i\leq n+1$, are painted by gray and could not be
passed.

\item We do not put any clause cell in the $(m+2)$-th row. If there exists
a clause cell $C[k,j]$, $2\leq k\leq n+1$, in the $j$-th column,
then $C[m+2,j]$ is painted white and can be passed; otherwise, the
cell $C[m+2,j]$ is painted gray and could not be passed.

\item From the $(m+3)$-th row to the $(n+m+2)$-th row ($n$ is the number of variables in the 3SAT instance from which the ISCPCS instance is
constructed), each row has two cells with the same color. We call
them {\em variable cells}, which correspond to two variable vertices
in the ISCPCS instance. (For an example, see the cells painted with
number $4$ in Figure~\ref{fig:spread_fff}.)  Each column has at most
one variable cell. If there is a variable cell $C[k,j]$ in the
$j$-th column, then the cells $C[i,j], m+3\leq i\leq n+m+2, i\neq
k$, are painted white and could be passed. If there is no variable
cell in the $j$-th column, then cells $C[i,j], m+3\leq i\leq n+m+2,
i\neq k$, are painted by gray and could not be passed.

\item For the last row (from bottom up), all the cells are painted
white, which means that all cells can be passed. The motivation is
to make sure that all the variable cells can connect to the final
cell (the upper-right cell) in the grid graph, which corresponds to
the destination node $t$ in ISCPCS.


\item There are a total of $(3m+2n)$ columns in the grid graph. If there are
$k$ clause vertices connecting to a fixed variable vertex in ISCPCS,
then there are $k$ clause cells connecting to a variable cell (say,
$C[i,j]$). The $k$ clause cells are located from the $(j-k)$-th
column to the $(j-1)$-th column, and the order of these columns are
adjusted to make those $k$ clause cells arranged from lower-left to
upper-right. (For an example, see Figure~\ref{fig:spread_fff}.) This
unique design can ensure that any monotone path from $C[1,1]$ to
$C[n+m+3,3m+2n]$ has to pass one clause cell and one variable cell
connect to it.
\end{enumerate}

\begin{figure}
 \centering
 \includegraphics{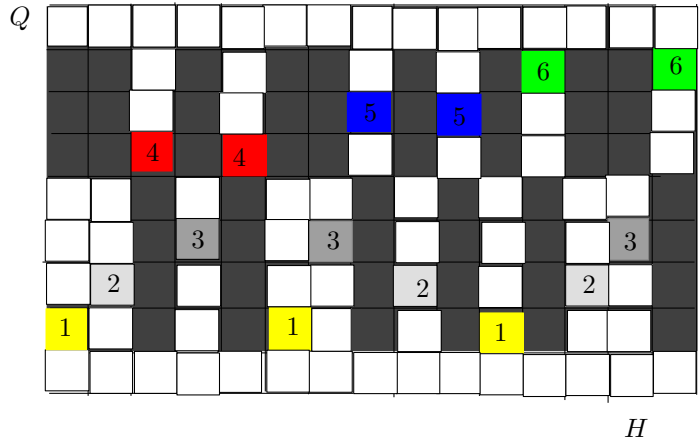}
 \caption{Illustration of the equivalence relation between the free space diagram (grid graph) and ISCPCS in Figure~\ref{fig:Np_proof}. The
horizontal coordinates denote the precise points, while the vertical
coordinates denote the imprecise points. A white cell means it can
be passed, while the gray cell means they could not be passed, and
the cells painted by the same color (and with the same number) means
any one of them can be passed.}
  \label{fig:spread_fff}
\end{figure}

\subsection{Realizing the grid graph geometrically}

To complete the proof that deciding $F^{\max}(Q,H)\leq\epsilon$ is
NP-hard, we need to construct a precise vertex sequence
$H=(h_1,h_2,...,h_{3m+2n})$ and an imprecise vertex sequence
$Q=(q_1,q_2,q_3,...,q_{n+m+3})$ (where each imprecise vertex is
modeled as a rectangle) such that the free space grid graph
constructed above can be geometrically realized.

Throughout the remaining parts, let $C(a,r)$ (resp. $D(a,r)$) be a
Euclidean circle (resp. disk) centered at $a$ and with radius $r$.
The rectangles used to model imprecise points do not need to be
along the same direction. The general idea of realizing the grid
graph geometrically is as follows.
\begin{figure}
 \centering
 \includegraphics[width=0.6\textwidth]{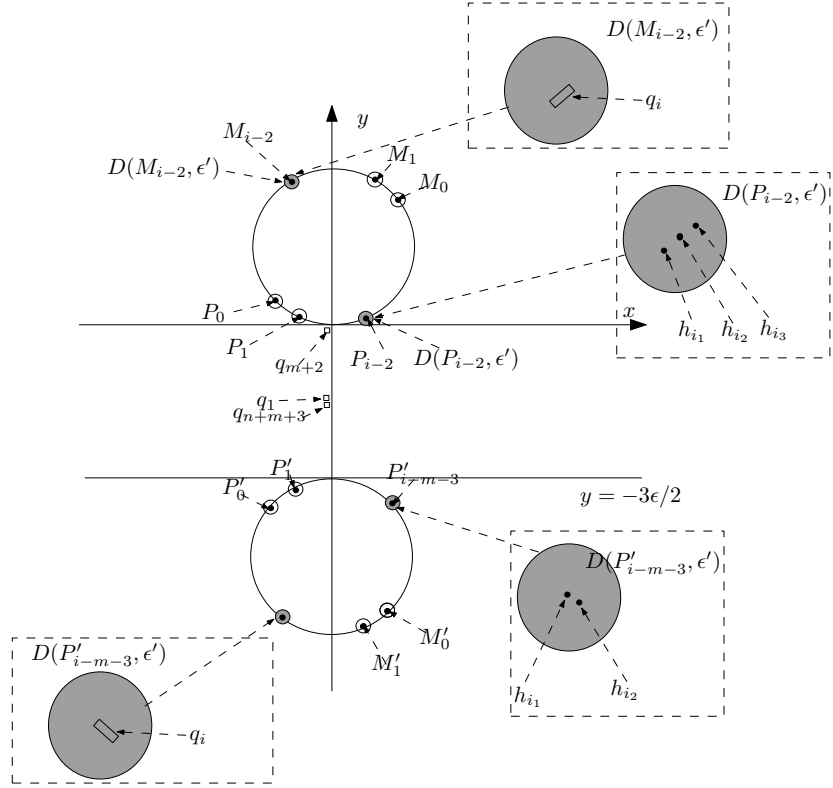}
 \caption{Illustration of the general idea of realizing the grid
graph geometrically.}
  \label{fig:show}
\end{figure}

\begin{enumerate}
\item For the 2nd to the $(m+1)$-th rows of the grid graph, we design the
points $P_i$, $M_i,i=0,1,2,...,m-1$, which satisfy
$d(P_i,M_i)=\epsilon$ and $d(P_i,M_j)\leq
\epsilon-2\epsilon'<\epsilon$ ($\epsilon'\ll\epsilon$) when $i \neq
j$. Each clause gadget is composed of three precise vertices
$h_{i_1},h_{i_2},h_{i_3}$ ($i_1,i_2,i_3$ is the index of sequence
$H$) and an imprecise vertex $q_i$ as in
Figure~\ref{fig:spread_variable}(b). $h_{i_1},h_{i_2},h_{i_3}$ are
located inside $D(P_{i-2},\epsilon')$, $q_i$  is located inside
$D(M_{i-2},\epsilon')$ when ($2 \leq i \leq m+1$). All the points
$P_i, i=0,1,2,...,m-1$ are located in a small region with diameter
less than $\epsilon/10$.

\item For the $(m+3)$-th to the $(n+m+3)$-th rows, we design the
points $P'_i$, $M'_i, i=0,1,2,...,n-1$, which satisfy
$d(P'_i,M'_i)=\epsilon$ and $d(P'_i,M'_j)\leq
\epsilon-2\epsilon'<\epsilon$ when $i \neq j$. Each variable gadget
is composed of two  precise vertices $h_{i_1},h_{i_2}$ and an
imprecise vertex $q_i$ as in Figure~\ref{fig:spread_variable}(a).
$h_{i_1},h_{i_2}$ are located inside $D(P'_{i-m-3},\epsilon')$,
$q_i$
 is located inside
$D(M'_{i-m-3},\epsilon')$ when $m+3 \leq i \leq n+m+2$. Again, all
the points $P'_i, i=0,1,2,...,n-1$ are located in another small
region with diameter less than $\epsilon/10$.

\item $d(P_i, M'_j)>3\epsilon/2>\epsilon, d(P'_i, M_j)>3\epsilon/2>\epsilon$. $d(p,q)\leq
\epsilon$ when $p\in D(P_i,\epsilon'), q\in D(P_j,\epsilon'), i \neq
j$, and $d(p,q)\leq \epsilon$ when $p\in D(P'_i,\epsilon')$, $q\in
D(P'_j,\epsilon')$, and $i \neq j$.

\item For the first and last row, the first imprecise vertex $q_1$ and last
imprecise vertex $q_{n+m+3}$ are located in a region
$D((0,-3\epsilon/4),\epsilon')$ which is fully covered by any circle
$C(p,\epsilon)$ where $p\in \bigcup D(P_i,\epsilon')$
($i=0,1,...,m-1$) and circle $C(p,\epsilon)$ where $p\in \bigcup
D(P'_i,\epsilon')$($i=0,1,...,n-1$).

\item For the $(m+2)$-th row, the vertex $q_{m+2}$ is located inide a
region $D((0,0),\epsilon')$, which is fully covered by the circle
$C(p,\epsilon)$ where $p\in \bigcup D(P_i,\epsilon')$
($i=0,1,...,m-1$) but not covered by any circle
  $C(p,\epsilon)$ where $p\in
\bigcup D(P'_i,\epsilon')$($i=0,1,...,n-1$).
\end{enumerate}

Due to space constraints, the details for realizing the grid graph
are given in the appendix (Section 7.2).

\begin{theorem}
Computing the upper bound of the discrete Fr\'{e}chet-distance with
imprecise input is NP-hard.
\end{theorem}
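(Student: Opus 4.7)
The plan is to chain the three reductions already set up in Sections 3.1--3.4. Starting from the fact (the previous Lemma) that ISCPCS is NP-hard by reduction from 3SAT, I would show first that an ISCPCS instance can be faithfully encoded as a colored directed grid graph following Section 3.3, and then that this grid graph can be geometrically realized as the free-space diagram of some precise sequence $H$ and imprecise rectangular sequence $Q$ with target threshold $\eps$. Since both conversions are polynomial in size, the NP-hardness of ISCPCS transfers to the decision problem "$F^{\max}(Q,H)\le \eps$", yielding NP-hardness of computing $F^{\max}(Q,H)$.

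For the first step, given an ISCPCS instance with $n$ variable-vertex pairs and $m$ clause-vertex triples, I build the $(n+m+3)\times(3m+2n)$ grid of Section 3.3. The equivalence to prove is: some color-spanning selection leaves $s$ disconnected from $t$ in ISCPCS iff, for the corresponding choice of one colored cell per color (the \emph{chosen} cell being the one that remains passable), there is no monotone staircase path from $C[1,1]$ to $C[n+m+3,3m+2n]$ through the white and chosen cells. The all-white bottom row implements the $s$-to-clause edges and the all-white top row implements the variable-to-$t$ edges. The stepped left-to-right/bottom-to-top placement of the $k$ clause cells immediately preceding each variable's column (item~6 of Section 3.3) forces any monotone path to traverse exactly one clause cell and one variable cell that are ISCPCS-adjacent, and the hybrid row $m+2$ mimics the clause-to-variable edges by being passable only in columns that carry a clause cell above them.

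Next, I invoke the geometric realization of Section~3.4. The construction places points $P_0,\ldots,P_{m-1}$ in a tiny cluster of diameter $<\eps/10$ together with companion points $M_i$ satisfying $d(P_i,M_i)=\eps$ and $d(P_i,M_j)\le\eps-2\eps'$ for $i\ne j$. Placing the three precise clause vertices $h_{i_1},h_{i_2},h_{i_3}$ inside $D(P_{i-2},\eps')$ and the imprecise rectangle $q_i$ inside $D(M_{i-2},\eps')$ realizes case (3-b) of Section 3.2 exactly: any realization of $q_i$ is within $\eps$ of at least two of the three clause vertices, while the third can be made to exceed $\eps$, matching the "choose one of three" semantics of a clause gadget. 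The analogous $P'_i,M'_i$ construction yields variable gadgets. The separation $d(P_i,M'_j),d(P'_i,M_j)>3\eps/2$ prevents clause rows and variable rows from interfering, and parking $q_1,q_{m+2},q_{n+m+3}$ at the stated locations ensures that $q_1$ and $q_{n+m+3}$ interact (within $\eps$) with every $P_i$ and every $P'_j$ (producing the sandwich rows), while $q_{m+2}$ is covered by the $\eps$-disks around each $P_i$ but by none around $P'_j$ (producing the single hybrid row).

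Stitching the two equivalences together, a realization $A$ of $Q$ forces $F(A,B)>\eps$ for every realization $B$ of $H$ iff the induced gray pattern disconnects the grid from $C[1,1]$ to $C[n+m+3,3m+2n]$, iff the corresponding color-spanning subgraph in the ISCPCS instance disconnects $s$ from $t$; hence $F^{\max}(Q,H)>\eps$ iff the ISCPCS instance is a YES instance. I expect the main obstacle to be verifying the feasibility of the geometric system: exhibiting concrete coordinates and a sufficiently small $\eps'$ for which the equalities $d(P_i,M_i)=\eps$, the strict inequalities $d(P_i,M_j)<\eps-2\eps'$, the cross-cluster separations $>3\eps/2$, and the covering/non-covering requirements for $q_1,q_{m+2},q_{n+m+3}$ are simultaneously satisfied, all while ensuring no accidental "white" cell slips in that would admit a spurious monotone path. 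This is a delicate planar placement exercise, and is where the bulk of the technical work lies; once it is done, the NP-hardness conclusion follows immediately from the ISCPCS lemma.
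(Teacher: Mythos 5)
Your overall route is exactly the paper's: chain Lemma~1 (ISCPCS is NP-hard via 3SAT), the Section~3.3 grid-graph encoding, and the Section~3.4 geometric realization, and conclude that $F^{\max}(Q,H)>\eps$ if and only if the ISCPCS (equivalently, 3SAT) instance is a yes-instance. The paper's own proof of the theorem is precisely this chaining, with the geometric feasibility delegated to the appendix, so deferring that verification is consistent with what the paper does.

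One substantive correction, though: you assert that in the clause gadget ``any realization of $q_i$ is within $\eps$ of at least two of the three clause vertices, while the third can be made to exceed $\eps$.'' That is not case (3-b), and taken literally it would break the reduction. Case (3-b) only requires that every realization of $q_i$ be within $\eps$ of at least \emph{one} of $h_{i_1},h_{i_2},h_{i_3}$; moreover, the rectangle through the intersection points $s_1,s_2,s_3$ is designed so that the adversary can place the realization in a portion covered by exactly one chosen circle, thereby graying out the other two cells. This ``at least one passable, and any single one can be isolated'' property is what matches the ISCPCS requirement of retaining exactly one vertex per color. Under your ``at least two always passable'' version, the adversary could block at most one clause cell per clause, which corresponds to keeping two clause vertices of that color; then a satisfying assignment in which some clause has only one true literal would not yield a path-blocking realization, and the direction ``3SAT satisfiable $\Rightarrow F^{\max}(Q,H)>\eps$'' fails. (A minor slip in the same spirit: in your last paragraph you quantify over ``every realization $B$ of $H$,'' but in this construction $H$ is precise; only $Q$ is imprecise.) With the gadget semantics stated correctly, your argument coincides with the paper's proof.
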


\begin{proof}
From our construction and Lemma 1, $F^{max}(Q,H)> \epsilon$ if and
only if there exist a choice that choose exactly one passable cell
of each color such that there is no monotone path from lower-left
cell to the upper-right cell in the equivalent free space grid
graph, which holds on if and only there exist an induced subgraph
$G_s$ consist of exactly one vertex of each color in equivalent
colored graph $G$ such that in $G_s$
 there is no monotone path from
$s$ to $t$, which in turn is true if and only if the corresponding
3SAT instance is satisfiable. The total reduction time is
$O((m+n)^2)$, and the theorem is proven. \qed

\end{proof}

\section{The discrete  Fr\'{e}chet distance with shortcuts for imprecise input}

As covered in the introduction, the discrete \frechet\ distance is
sensitive to local errors; hence, in practice, it makes sense to use
the {\em discrete \frechet\ distance with shortcuts}
\cite{Avraham2014}. This is defined as follows. (We comment that
this idea of taking shortcuts was used as early as in 2008 for
simplifying protein backbones \cite{BeregJWYZ08}.)

\begin{definition}
One-sided discrete Fr\'{e}chet distance with shortcuts: For two
point sequences $A=(a_1,a_2,a_3,...,a_n)$, and
$B=(b_1,b_2,b_3,...,b_m)$, let $F_c(A,B)$ denote the discrete
Fr\'{e}chet distance with shortcuts on side $B$, where\\
$F_c(A,B)=\min \{ F(A,B')\}$ and $B'$ is a non-empty subsequence of
$B$.
\end{definition}

Alternatively, we can define the discrete Fr\'{e}chet distance with
shortcuts on side $B$ as follows. We loosely call each edge
appearing in the set $E_\delta$ (in Def.~1) a {\em match}. Given a
match $(a_i,b_j)$, the next match $(a_k,b_l)$ needs to satisfy one
of the three conditions:

a) $k=i+1,l=j$;

b) $k=i,l>j$;

c) $k=i+1,l>j$.

In \cite{Avraham2014}, Avraham \emph{et al.} gave a definition of
discrete Fr\'{e}chet distance with shortcuts, they assumed no
simultaneous jumps on both sides (i.e., case c) does not occur),
though they claimed that their algorithm can be easily extended to
this case when simultaneous jumps are allowed.

Now we define the discrete Fr\'{e}chet distance  with shortcuts for
imprecise data as follows:

\begin{definition} $F^{\max}_1(U,W)$:
For two region sequences $U=(u_1,u_2,...,u_n)$ and
$W=(w_1,w_2,...,w_m)$, the upper bound of the discrete Fr\'{e}chet
distance with shortcuts on side $W$ is defined as
$F^{\max}_1(U,W)=\max \{F_c(A,B)\}$, where $A=(a_1,a_2,...,a_n)$
(resp. $B=(b_1,b_2,...,b_m$) is a possible realization of $U$ (resp.
$W$) satisfying $a_i\in u_i$ and $b_j\in w_j$.
\end{definition}

\subsection{Computing $F^{\max}_1(U,W)$ when one sequence is imprecise}

At first, we consider the case when $U$ is a precise vertex sequence
composed of $n$ precise points in $R^d$, and $W$ is an imprecise
vertex sequence, where each of the $m$ imprecise points is modeled
as a ball in $R^d$.

Let $u'_i$ denote the ball centered at $u_i$ with radius $\delta$,
i.e., $u'_i=D(u_i,\delta)$. Let $M(i,j)$ denote the {\em match} or
{\em matching pair} between $u_i$ and $w_j$.

For the discrete Fr\'{e}chet distance with shortcuts on side $W$, we
only need to consider the jump from $M(i,j)$ to $M(i+1,k)$ ($k\geq
j$), and there is no need to consider the jump from $M(i,j)$ to
$M(i,l)$($l\geq j$) . This is due to that the match $M(i,l)$ will
jump to $M(i+1,l')$ ($l'\geq l$) finally when $i < n$, and we can
jump directly from $M(i,j)$ to $M(i+1,l')$ without passing through
$M(i,l)$.

The algorithm to decide $F^{\max}_1(U,W)\leq \delta$ is as follows.

Starting from the starting matching pair $M(1,j^*(1))$ to the ending
matching pair $M(n,j^*(n))$ if possible, where $j^*(1)$ is the
smallest $k$ ($1 \leq k\leq m$) which satisfies that $w_{k}
\subseteq u'_{1}$, $j^{*}(i)$ be the  index of sequence $W$ computed
by the decision procedure  $F^{\max}_1(U,W) \leq \delta$ below for
each fixed $i$, let $S$ denote the set of those matches (or matching
pairs) $M(i,j^{*}(i))$.


\vskip 2pt
\setlength{\fboxsep}{10pt}%

\fbox{%

\begin{Bflushleft}[b]

1. $i=1,j=1$.\\

2. While ($i \leq n$)\\

 \ \ \ \ \ \ \ \ \ \ \ \ Find a  smallest $k$ ($j \leq k\leq m$) which satisfies that $w
_{k} \subseteq u'_{i}$.\\

 \ \ \ \ \ \ \ \ \ \ \ \ If  $k$  exists,
 let $j^{*}(i)=k$, add the match $M(i,j^{*}(i))$  to $S$,\\

 \ \ \ \ \ \ \ \ \ \ \ \ \ \ and update $j=k$, $i=i+1$.\\

\ \ \ \ \ \ \ \ \ \ \ \ Else return $F^{\max}_1(U,W)>\delta$.\\

Return $F^{\max}_1(U,W)\leq \delta$.

\end{Bflushleft}

} \vskip 2pt

\begin{center}
 {\bf Fig.7}  The decision procedure for $F^{\max}_1(U,W) \leq \delta$, where
$U$ is a precise sequence and $W$ is an imprecise sequence.
\end{center}

\vskip 4pt

We will show that the above procedure correctly decides whether
$F^{\max}_1(U,W) \leq \delta$.

\begin{lemma}

\label{lemma:monotone} There exists a  realization of $W$ to make
$j^{*}(i)$ be the smallest index of sequence $W$ such that
$M(i,j^{*}(i))$ is reachable by jump from $M(1,j^*(1))$ for each
fixed $i$. That means there is no monotone increasing path from
$M(1,j^*(1))$ to $M(i,j)$ when $1 \leq j<j^{*}(i)$.
\end{lemma}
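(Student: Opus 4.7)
The plan is to exhibit one fixed ``adversarial'' realization of $W$ in which, simultaneously for every $i$, the greedy index $j^{*}(i)$ equals the smallest reachable index at step $i$, and then to verify this minimality by induction on $i$.

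For the construction, I set $j^{*}(0):=1$ and call an index $j$ \emph{skipped at step $i$} when $j^{*}(i-1)\le j<j^{*}(i)$. Since $j^{*}$ is non-decreasing, every $j\in\{1,\dots,j^{*}(n)-1\}$ is skipped at a unique step $i(j):=\min\{i:j^{*}(i)>j\}$, and the minimality of the greedy rule at step $i(j)$ forces $w_j\not\subseteq u'_{i(j)}$. I then choose
\[
b_j\in w_j\setminus u'_{i(j)},
\]
which is nonempty by the previous sentence; indices $j\ge j^{*}(n)$ are placed arbitrarily inside $w_j$.

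The inductive argument proceeds as follows. The base $i=1$ is immediate because every candidate monotone path begins at $M(1,j^{*}(1))$. For the inductive step, I assume the smallest reachable index at each $i'<i$ equals $j^{*}(i')$ and suppose, toward a contradiction, that some monotone path ends at $M(i,j)$ with $j<j^{*}(i)$. Applying the inductive hypothesis to the prefix at step $i-1$ gives $j_{i-1}\ge j^{*}(i-1)$, so monotonicity forces $j\in[j^{*}(i-1),j^{*}(i))$ and hence $i(j)=i$. The construction then forces $b_j\notin u'_i$, i.e.\ $d(u_i,b_j)>\delta$, so the match $M(i,j)$ is not even valid---contradiction. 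Reachability of $M(i,j^{*}(i))$ itself is witnessed by the greedy sequence $M(1,j^{*}(1)),\ldots,M(i,j^{*}(i))$, whose jumps are legal (because $j^{*}$ is non-decreasing) and whose matches are valid by construction.

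The main subtlety I expect is the consistency of the adversarial placement: the \emph{same} point $b_j$ must invalidate the match at the skipping step $i(j)$ while keeping every earlier match $M(i',j)$ with $j=j^{*}(i')$ valid, a situation that arises precisely when the greedy waits at the same $w_j$ for several consecutive steps before being forced to advance. This is handled by observing that $w_j\subseteq u'_{i'}$ for each such earlier $i'<i(j)$, so the nonempty set $w_j\setminus u'_{i(j)}$ automatically sits inside every $u'_{i'}$ required for those earlier matches, and the chosen $b_j$ therefore serves both roles at once.
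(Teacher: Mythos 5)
Your proposal is correct and follows essentially the same route as the paper: an adversarial realization that places each skipped $b_j$ outside the ball $u'_{i}$ of the row at which $j$ is passed over, combined with induction on $i$ using monotonicity of the column index. The only difference is presentational—you define the realization globally via the skipping step $i(j)$ and explicitly check that $w_j\subseteq u'_{i'}$ keeps earlier greedy matches valid, whereas the paper builds the same realization incrementally inside the induction and leaves that consistency check implicit.
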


\begin{proof}

We prove this lemma by an induction on $i$.

(1) Basis: When $i=1$, then there exists a realization
$(b_1,b_2,...,b_{j^*(1)-1})$ of  $(w_1,w_2,...,w_{j^*(1)-1})$
respectively which satisfies $d(u_1,b_j)>\delta$ for $1 \leq
j<j^*(1)$. Hence, there exists a realization of $W$ which make the
matching $M(1,j)$ is not reachable when $1 \leq j<j^*(1)$.

(2) Inductive hypothesis: We assume that there exists a realization
$(b_1,b_2,...,b_{j^*(k)-1})$ of $(w_1,w_2,...,w_{j^*(k)-1})$ which
makes $M(k,j)$ ($1 \leq j<j^*(k)$)  not reachable when $i=k$.

(3) Inductive step: We consider the case when $i=k+1$. For
$M(k+1,j)$, $j^{*}(k) \leq j < j^{*}(k+1)$ not in $S$ then there
exists a realization $(b_{j^{*}(k)},...,b_{j^{*}(k+1)-1})$ of
$(w_{f(k)},...,w_{j^{*}(k+1)-1})$ which satisfies
$d(u_{k+1},b_j)>\delta$, for $j^{*}(k)\leq j<j^{*}(k+1)$. Based on
the inductive hypothesis, there exists a realization
$(b_1,b_2,...,b_{j^{*}(k)})$ which makes $M(k,j)$, $1 \leq
j<j^{*}(k)$, not reachable. By combining the two parts, there exists
a realization $(b_1,b_2,...,b_{j^{*}(k)},...,b_{j^{*}(k+1)-1})$ of
$(w_1,w_2,...,w_{j^{*}(k)},...,w_{j^{*}(k+1)-1})$ which makes
$M(k+1,j)$, where $1 \leq j<j^{*}(k+1)$, not reachable. \qed
\end{proof}

\begin{lemma}
In $\reals^d$, given a precise vertex sequence $U$ with size $n$ and
an imprecise vertex sequence $W$ with size $m$ (each modeled as a
$d$-ball), whether $F^{\max}_1(U,W) \leq \delta$ can be determined
in $O(d(n+m))$ time and space.
\end{lemma}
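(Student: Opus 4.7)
The plan is to verify that the greedy procedure in Fig.~7 correctly decides $F^{\max}_1(U,W)\le\delta$ and to bound its cost by a simple amortized analysis. I would split the argument into a soundness direction, a completeness direction, and a complexity analysis, and expect the completeness direction to be the delicate part.

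For soundness, the output of the procedure is a non-decreasing index sequence $j^*(1)\le j^*(2)\le\cdots\le j^*(n)$ with $w_{j^*(i)}\subseteq u'_i=D(u_i,\delta)$ for every $i$. Thus for \emph{any} realization $B=(b_1,\ldots,b_m)$ of $W$, the subsequence $B'=(b_{j^*(1)},\ldots,b_{j^*(n)})$ satisfies $d(u_i,b_{j^*(i)})\le\delta$ at every position (with the case $j^*(i)=j^*(i+1)$ corresponding to a ``wait'' step on side $W$), giving $F_c(U,B)\le\delta$ and hence $F^{\max}_1(U,W)\le\delta$.

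For completeness, I would invoke Lemma~\ref{lemma:monotone}. Suppose the procedure fails at iteration $i=k$: no index $j\ge j^*(k-1)$ satisfies $w_j\subseteq u'_k$. The lemma provides a realization of the prefix $(w_1,\ldots,w_{j^*(k-1)})$ that blocks every match $M(k-1,j)$ with $j<j^*(k-1)$. I would extend this realization by choosing, for each remaining index $j\ge j^*(k-1)$, a witness point $b_j\in w_j\setminus u'_k$ (which exists by the failure condition). In the resulting full realization $B$, no match $M(k,j)$ with $j\ge j^*(k-1)$ is valid, so under shortcuts on side $W$ no subsequence of $B$ can reach row $k$ from row $k-1$, forcing $F_c(U,B)>\delta$ and $F^{\max}_1(U,W)>\delta$. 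The main subtlety is verifying that the two realizations glue together: Lemma~\ref{lemma:monotone} constrains only the prefix, leaving the remaining positions free to be chosen as witnesses that simultaneously block every candidate index for row $k$.

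For the complexity, a standard amortized count shows that across the whole run $i$ advances $n$ times while $j$ moves monotonically upward through $\{1,\ldots,m\}$, so the total number of containment tests is $O(n+m)$. Each test $w_k\subseteq u'_i$ between two $d$-balls $w_k=D(c_k,r_k)$ and $u'_i=D(u_i,\delta)$ reduces to checking $\|c_k-u_i\|+r_k\le\delta$, which takes $O(d)$ time. Hence the total running time is $O(d(n+m))$, and the space for storing the input, the set $S$ of at most $n$ matches, and the loop indices is also $O(d(n+m))$.
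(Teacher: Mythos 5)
Your proposal is correct and follows essentially the same route as the paper: soundness from the containment $w_{j^*(i)}\subseteq u'_i$ along the greedy monotone index sequence, completeness by invoking Lemma~\ref{lemma:monotone} for the prefix and extending it with witness points $b_j\in w_j\setminus u'_k$ at the failure row, and the $O(d(n+m))$ bound from $O(d)$-time containment tests along a row- and column-monotone scan. Your explicit gluing of the prefix realization with the blocking witnesses is just a more detailed write-up of the step the paper states tersely via Lemma~\ref{lemma:monotone}.
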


\begin{proof}
If the above decision procedure returns ``$F^{\max}_1(U,W)>\delta$"
when $i=k+1$, then there exists a realization of $W$ to make
$M(k+1,j)$ not reachable, for $k+1<n$ and $j^{*}(k) \leq j\leq m$,
based on Lemma~\ref{lemma:monotone}. That means
$F^{\max}_1(U,W)>\delta$.

If the above decision procedure returns ``$F^{\max}_1(U,W)\leq
\delta$", then there exists a monotone path from $M(1,j^{*}(1))$ to
$M(n,j^{*}(n))$ for any realization of $W$. The reason is that, for
any $i$ and $j^{*}(i)$, $M(i,j^{*}(i))\in S$ implies $w_{j^{*}(i)}
\subseteq u'_{i}$, which means $F^{\max}_1(U,W)\leq \delta$. The
correctness is hence proven.

As for the running time, checking whether $w_{j} \subseteq u'_{i}$
takes $O(d)$ time. The decision procedure incrementally tests on a
row- and column-monotone path. Therefore, it runs in $O(d(m + n))$
time and space. \qed
\end{proof}

Let $\delta_{i,j}=d(u_i,c_j)+r_j$ where $c_j$ and $r_j$ are the
center and radius of $w_j$ respectively. For the optimization
problem, there are a total of $O(mn)$ events when $\delta$ increases
continuously. Here, an event $w_{j} \subseteq u'_{i}$ occurs when
$\delta$ increases to $\delta_{ij}$. Therefore, we can solve the
optimization problem of computing $F^{\max}_1(U,W)$ in $O(dmn\log
mn)$ time by sorting $\delta_{i,j}$'s and performing a binary
search. We show how to improve this bound below. We first consider
the planar case in the next theorem.

\begin{theorem}
In $\reals^2$, given a precise vertex sequence $U$ with size $n$ and
an imprecise vertex sequence $W$ with size $m$, all modeled as disks
in $\reals^2$ with an equal radius, $F^{\max}_1(U,W)$ can be
computed in $O((m^{2/3}n^{2/3}+m+n)\log^3(m+n))$ time.
\end{theorem}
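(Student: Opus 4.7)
The plan is to reduce the optimization to a bipartite distance selection problem combined with the linear-time decision procedure from Lemma~4. The key observation is that because all $w_j$ share a common radius $r$, the inclusion $w_j\subseteq u'_i=D(u_i,\delta)$ is equivalent to $d(u_i,c_j)\le\delta-r$. Hence as $\delta$ grows continuously, the only values at which the outcome of the decision procedure can change are $\delta_{i,j}:=d(u_i,c_j)+r$, and $F^{\max}_1(U,W)$ equals the smallest $\delta_{i,j}$ for which the decision procedure returns ``yes''. The predicate ``$F^{\max}_1(U,W)\le\delta$'' is clearly monotone in $\delta$ (only more disks become contained as $\delta$ grows), so binary search over the sorted multiset of the $mn$ candidate values $\delta_{i,j}$ is well defined, and Lemma~4 already supplies an $O(m+n)$-time oracle for a single fixed $\delta$ in the plane.

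To avoid enumerating all $mn$ candidates explicitly, I would invoke bipartite distance selection on the two planar sets $U$ and $C=\{c_1,\ldots,c_m\}$: given a rank $k\in[1,mn]$, the $k$-th smallest pairwise Euclidean distance in $U\times C$ can be computed in $O((m^{2/3}n^{2/3}+m+n)\log^2(m+n))$ time using the Katz--Sharir paradigm (cuttings of the arrangement of distance circles, combined with randomized interpolation search or parametric search). Then I would binary search on the rank $k$: in each of the $O(\log(mn))$ iterations, compute the $k$-th smallest distance via the selection algorithm, add $r$ to obtain a trial $\delta$, and invoke the decision procedure of Lemma~4 on $\delta$ to prune the rank window by half. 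The smallest feasible $\delta_{i,j}$ is isolated after $O(\log(m+n))$ rounds; multiplying the per-round cost by the number of rounds gives the claimed $O((m^{2/3}n^{2/3}+m+n)\log^3(m+n))$ total running time.

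The main obstacle will be supplying a bipartite distance selection algorithm with exactly the quoted bound, since the published versions are typically phrased for a single point set of size $n$ and give $O(n^{4/3}\log^{O(1)}n)$. Care is needed to track the dependence on both $m$ and $n$ when the sets are unbalanced (the cutting size should be tuned to $|U|\cdot|C|=mn$ rather than $(m+n)^2$), and to verify that the polylogarithmic overhead of the selection algorithm absorbs cleanly into the outer binary-search factor to yield a $\log^3$ and not something larger. A secondary but important point is that the common-radius assumption is what lets us collapse the events into plain Euclidean distances between $U$ and $C$; for unequal radii the ``events'' would be values of the \emph{weighted} distance $d(u_i,c_j)+r_j$, whose level sets are additively-weighted Voronoi boundaries rather than circles, and the $O((mn)^{2/3})$ cutting bound used here would no longer be available without further work.
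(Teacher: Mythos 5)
Your proposal is correct and follows essentially the same route as the paper: binary search over the $O(mn)$ critical values $\delta_{i,j}=d(u_i,c_j)+r$, using the Katz--Sharir bipartite distance selection bound $O((m^{2/3}n^{2/3}+m+n)\log^2(m+n))$ per round and the linear-time decision procedure of Lemma~4 as the oracle, giving the stated $O((m^{2/3}n^{2/3}+m+n)\log^3(m+n))$ bound. Your explicit remark that the equal-radius assumption is what collapses the events $w_j\subseteq u'_i$ into plain Euclidean distances between $U$ and the centers is a useful clarification that the paper leaves implicit, but it does not change the argument.
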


\begin{proof}
The event $w_{j} \subseteq u'_{i}$ occurs when
$\delta=\delta_{i,j}$. We do not need to sort the $O(mn)$ distances;
instead, we can use the distance selection algorithm in
\cite{Katz1997} as follows. One can select the $k$-th smallest
pairwise distance $d_k$ in $A \times B$, where $A$ and $B$ are two
precise vertex sequences in the plane and $|A|=n,|B|=m$. The running
time of this distance selection algorithm is
$O((m^{2/3}n^{2/3}+m+n)\log^2(m+n))$ \cite{Katz1997}. By combining
this distance selection algorithm and the binary search, we can
compute $F^{\max}_1(U,W)$ in $O((m+n)\log
(mn)+(m^{2/3}n^{2/3}+m+n)\log^2(m+n)\log (mn))$
$=O((m^{2/3}n^{2/3}+m+n)\log^3(m+n))$ time. \qed
\end{proof}

Unfortunately, the distance selection algorithm could not be
extended to high dimensional space. Hence, in a dimension higher
than two, we use a dynamic programming method to compute
$F^{\max}_1(U,W)$.

Let $U(i,j)$ (resp. $W(i,j)$) denote the partial sequence
$U(i,j)=(u_i,u_{i+1},...,u_j)$ (resp.
$W(i,j)=(w_i,w_{i+1},...,w_j))$. $F^{\max}_1(i,j)$ denotes the upper
bound of the discrete Fr\'{e}chet distance with shortcuts on side
$W(1,j)$ for sequences $U(1,i)$ and $W(1,j)$, and $Z^{\max}_1(i,j)$
denotes the upper bound of the discrete Fr\'{e}chet distance with
shortcuts on side $W(1,j)$ between $U(1,i)$ and $W(1,j)$ on the
condition that $w_j$ is retained (not cut), that means $w_j$ match
$u_i$, namely
$Z^{\max}_1(i,j)=\max\{F^{\max}_1(i-1,j),\delta_{i,j}\}$. While in
$F^{\max}_1(i,j)$, $u_i$ may do not   match $w_j$ as $w_j$ may be
cut.

Then we have the recurrence relations as follows.

\[Z^{\max}_1(i+1,j) \\
=\left\{\begin{array}{ll}
\max\{F^{\max}_1(i,j),\delta_{i+1,j}\},& \mbox{$i> 0$}\\
\delta_{1,j}, & \mbox{$i=0$}
\end{array}
\right. \]

\[F^{\max}_1(i,j+1)\\
=\left\{\begin{array}{ll}
\min\{F^{\max}_1(i,j),Z^{\max}_1(i,j+1)\}, & \mbox{$j> 0$}\\
Z^{\max}_1(i,j+1), & \mbox{$j=0$}
\end{array}
\right. \]

We need to run these two recurrence relations alternatively, e.g.
computing $Z^{\max}_1(i,*)$ ($(i,*)$ denotes $\{(i,j),1 \leq j \leq
m\}$), then $F^{\max}_1(i,*)$, then $Z^{\max}_1(i+1,*)$, etc. It is
easy to see that this dynamic programming algorithm takes $O(mn)$
time and space after all the distances $\delta_{i,j}$ are calculated
in $O(dmn)$ time and space. On the other hand, the space complexity
can be improved as we only need to store a constant number of
columns of values and compute $\delta_{i,j}$ when needed. Hence we
have the following theorem.

\begin{theorem}
\label{theorem:dynamic} In $\reals^d$, given a precise vertex
sequences $U$ of size $n$ and an imprecise vertex sequences $W$ of
size $m$ (each modeled as a $d$-ball), $F^{max}_1(U,W)$ can be
computed in $O(dmn)$ time and $O(d(m+n))$ space.
\end{theorem}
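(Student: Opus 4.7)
The plan is to justify the two recurrences stated just above the theorem and then argue the claimed time and space bounds for the resulting dynamic program.

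First, I would verify the correctness of the recurrences by a case analysis on the final element of the chosen subsequence. Fix $(i, j)$ and consider $F^{\max}_1(i, j)$, the upper bound when the $U$-side uses $U(1, i)$ and the $W$-side is drawn from $W(1, j)$ via a non-empty subsequence. Either $w_j$ is retained in that subsequence or it is dropped. If $w_j$ is dropped, the realization of $w_j$ is irrelevant and the problem reduces to $F^{\max}_1(i, j-1)$; if $w_j$ is retained, it must match $u_i$, giving value $Z^{\max}_1(i, j)$. Since the shortcut side is free to pick the smaller of the two options, the $F$ recurrence follows. For the $Z$ recurrence, if $u_i$ matches $w_j$, the contribution of this match is at most $\delta_{i,j} = d(u_i, c_j) + r_j$, with equality when $b_j$ lies on the boundary of $w_j$ antipodally to $u_i$; the subpath on $U(1, i-1)$ using any subsequence of $W(1, j)$ has worst-case value $F^{\max}_1(i-1, j)$. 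The Fr\'echet cost of the combined path is the max of the two, hence $Z^{\max}_1(i, j) = \max\{F^{\max}_1(i-1, j), \delta_{i, j}\}$.

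Next, I would verify the stated base cases: $Z^{\max}_1(1, j) = \delta_{1, j}$ (there is no prior subpath to account for) and $F^{\max}_1(i, 1) = Z^{\max}_1(i, 1)$ (the only non-empty subsequence of $W(1, 1)$ is $(w_1)$, so $w_1$ must be retained). The recurrences together with these base cases fill the entire $n \times m$ table in the paper's alternating order: for each row $i$, first compute $Z^{\max}_1(i, j)$ for every $j$ from the previously stored $F^{\max}_1(i-1, \cdot)$, then sweep $j$ left-to-right to build $F^{\max}_1(i, \cdot)$ from $F^{\max}_1(i, j-1)$ and $Z^{\max}_1(i, j)$. The desired output is $F^{\max}_1(n, m)$.

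For the complexity, each of the $mn$ cells is evaluated in $O(1)$ arithmetic operations once the relevant $\delta_{i, j}$ is known, and computing $\delta_{i, j}$ from coordinates and radius costs $O(d)$ time, giving $O(dmn)$ total time. For the space bound, the key observation is that row $i$ of the DP only references row $i-1$; all earlier rows can be discarded. Keeping a constant number of rolling length-$m$ arrays (for $F^{\max}_1(i-1, \cdot)$, $Z^{\max}_1(i, \cdot)$, and $F^{\max}_1(i, \cdot)$) and recomputing $\delta_{i, j}$ on demand avoids materializing the full distance table. Together with the input itself, $O(dn)$ for the precise points and $O(dm)$ for the centers and radii of the balls, the total working space is $O(d(m+n))$. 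The most delicate step is the $Z$ recurrence: one should check that a realization of $w_j$ achieving $d(u_i, b_j) = \delta_{i, j}$ is compatible with a realization that also attains the subpath bound $F^{\max}_1(i-1, j)$. This is fine because the shortcut side is free to drop $w_j$ from the subpath subsequence, so the worst-case realizations of $w_j$ and of the remaining balls decouple and the stated formula holds.
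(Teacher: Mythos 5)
Your proposal is essentially the paper's own argument: the same $Z^{\max}_1/F^{\max}_1$ recurrences with the same base cases, evaluated row by row in alternation, with $O(1)$ work per cell plus $O(d)$ per distance $\delta_{i,j}$ giving $O(dmn)$ time, and a constant number of rolling arrays with on-demand recomputation of $\delta_{i,j}$ giving $O(d(m+n))$ space. The only difference is that you try to justify the recurrences explicitly (the paper merely states them); note that your ``decoupling'' remark is the informal stand-in for the real issue --- exhibiting a single adversarial realization that simultaneously defeats every shortcut subsequence --- which the paper handles via the greedy realization construction of Lemma~\ref{lemma:monotone} for the decision procedure, so if you want full rigor you should reduce the optimization value to that decision threshold (the DP value is exactly $\min$ over monotone assignments $j(1)\leq\cdots\leq j(n)$ of $\max_i \delta_{i,j(i)}$) rather than rely on the decoupling claim alone.
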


\subsection{Computing $F^{\max}_1(U,W)$ when both sequences are imprecise}

In this subsection, we consider the problem of computing
$F^{\max}_1(U,W)$ when both $U=(u_1,u_2,...,u_n)$ and
$W=(w_1,w_2,...,w_m)$ are imprecise sequences, where each vertex is
modeled as a disk in $\reals^2$. (Our algorithm works in $\reals^d$,
but as it involves Voronoi diagram in $\reals^d$, the high cost
makes it impractical.)

For two imprecise sequences $U,W$, and a precise point $p$, the
maximal distance between a  precise point $p$ and a region $w_x$ is
defined as
 $D_{\max}(p,w_x)=\max\{d(p,q),q\in w_x\}$. Let
$D_{\min}(p,j,k)=\min_{j\leq x\leq k}\{D_{\max}(p,w_x)\}$ denote the minimal
distance between a point $p$ and several regions
$\{w_j,w_{j+1},...,w_k\}$.

We define $D(i,j,k)=\max\{ D_{\min}(p,j,k), p \in u_i\}$.  In this
subsection, we compute the $j^*(i)$ by using the decision procedure
below:

\vskip 2pt

\setlength{\fboxsep}{10pt}%

\fbox{%
\begin{Bflushleft}[b]

1. $i=1,j=1$.\\

2. While ($i \leq n$)\\

 \ \ \ \ \ \ \ \ \ \ \ \ Find a  smallest $k$ ($j \leq k\leq m$) which satisfies  $D(i,j,k) \leq
 \delta$.\\

 \ \ \ \ \ \ \ \ \ \ \ \ If  $k$  exists, let $j^*(i)=k$, add $M(i,j^*(i))$ to $S$,\\

 \ \ \ \ \ \ \ \ \ \ \ \ \ \  and update $j=k$, $i=i+1$.\\

\ \ \ \ \ \ \ \ \ \ \ \ Else return $F^{\max}_1(U,W)>\delta$.\\

Return $F^{\max}_1(U,W)\leq \delta$.\\

\end{Bflushleft}}
\vskip 2pt

\begin{center}
{\bf Fig.~8} The decision procedure for $F^{\max}_1(U,W) \leq
\delta$ when both $U$ and $W$ are imprecise sequences.
\end{center}

\vskip 4pt

\begin{lemma}
Given two imprecise vertex sequences $U$ and $W$ with sizes $|U|=n$
and $|W|=m$, each vertex modeled as a disk in $\reals^2$),
$F^{\max}_1(U,W) \leq \delta$ can be determined in $O(m^2+n)$ time
and $O(m+n)$ space.
\end{lemma}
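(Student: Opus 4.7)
My plan is to mirror the one-sided argument (Lemma~\ref{lemma:monotone} and the subsequent decision lemma): first establish correctness of the procedure in Fig.~8, then bound its cost through the monotone two-pointer behavior of $(j, k)$.

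The pivotal reformulation is that
\[
D(i, j, k) \leq \delta \iff u_i \;\subseteq\; \bigcup_{x=j}^{k} D(c_x,\, \delta - r_x),
\]
which follows from $D_{\max}(p, w_x) = d(p, c_x) + r_x$: asking that every $p \in u_i$ admit some $x \in [j, k]$ with all of $w_x$ within $\delta$ of $p$ is precisely the stated disk cover. Given this reformulation, I would adapt the induction of Lemma~\ref{lemma:monotone} to show that $j^*(i)$ produced by the procedure is the smallest index for which $M(i, j^*(i))$ can be reached from $M(1, j^*(1))$ under some adversarial realization of $U$ and $W$; correctness of the decision procedure then follows just as in the one-sided case.

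For the running time, I would reduce each coverage test to the standard fact that a disk $u_i$ lies inside a union of disks $\mathcal{U}$ iff $\partial u_i \subseteq \mathcal{U}$ and the center of $u_i$ lies in some disk of $\mathcal{U}$. Since each $D(c_x, \delta - r_x)$ intersects $\partial u_i$ in at most one arc computable in $O(1)$ time, checking coverage reduces to deciding circular arc-coverage plus a single center-in-disk test. For iteration $i$, I would scan $w_j, w_{j+1}, \ldots$ in order, incrementally collect arcs on $\partial u_i$, maintain a center-covered flag, and stop at the smallest $k$ where both conditions hold (or report failure at $k = m$).

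The hard part will be squeezing the bound to $O(m^2 + n)$ via the right amortization. The reset $j_{i+1} = k_i$ together with monotonicity of $k$ yields $\sum_i (k_i - j_i + 1) \leq m + n$, so the total arc-generation work is $O(m + n)$; however, recognizing full arc-coverage of the fresh circle $\partial u_i$ can cost up to $O(m)$ in an iteration that actually advances $k$ (e.g., to sort the accumulated arcs or re-merge them on the circle). Bounding the number of such advancing iterations by $m$, each charged $O(m)$, and bounding the remaining stationary iterations (constant-time coverage re-checks against the carried-over disk $D(c_{j_i}, \delta - r_{j_i})$) by $n$, I would obtain the total $O(m^2 + n)$. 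Space stays $O(m + n)$ because only the input and the current iteration's arc structure need to be kept.
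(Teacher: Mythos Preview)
Your reformulation $D(i,j,k)\le\delta \iff u_i\subseteq\bigcup_{x=j}^{k} D(c_x,\delta-r_x)$ is correct, and the correctness skeleton (adapting the induction of Lemma~\ref{lemma:monotone} to the two-sided setting) is essentially what the paper does. The genuine gap is in the coverage test. The claim ``a disk $u_i$ lies inside a union of disks $\mathcal U$ iff $\partial u_i\subseteq\mathcal U$ and the center of $u_i$ lies in some disk of $\mathcal U$'' is false. For a counterexample, let $u_i$ be the unit disk, and let $\mathcal U$ consist of a tiny disk at the origin together with a necklace of tiny disks centered on the unit circle that together cover $\partial u_i$; both of your conditions hold, yet the point $(1/2,0)$ is uncovered. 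In terms of $D(i,j,k)$, the function $p\mapsto D_{\min}(p,j,k)=\min_x\bigl(d(p,c_x)+r_x\bigr)$ can attain its maximum over $u_i$ at an \emph{interior} point---specifically at a vertex of the additively weighted Voronoi diagram of $\{c_j,\dots,c_k\}$ lying inside $u_i$---and your arc test on $\partial u_i$ never sees such a point. Your procedure would then declare $D(i,j,k)\le\delta$, and hence $F^{\max}_1(U,W)\le\delta$, when in fact it is larger.

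The paper closes exactly this hole by working with the inverted additive Voronoi diagram (iaVD) of $w_j,\dots,w_k$: if $u_i$ meets the $1$-skeleton of the iaVD, the maximizing realization of $u_i$ is taken at an iaVD vertex inside $u_i$; otherwise $u_i$ lies in a single iaVD cell and $D(i,j,k)$ reduces to a single farthest-point distance. The $O(m^2+n)$ bound then comes from building the iaVD incrementally as $k$ grows within an iteration (each insertion into a diagram of size $s$ costs $O(s)$, so iteration $i$ costs $O\bigl((j^*(i)-j^*(i{-}1))^2\bigr)$), and summing via $\sum_i (j^*(i)-j^*(i{-}1))\le m$ gives $O(m^2)$, plus $O(1)$ overhead for each of the $n$ iterations. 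Your amortization scheme would reach the same bound once the coverage test is repaired, but repairing it---detecting holes of $\bigcup_x D(c_x,\delta-r_x)$ inside $u_i$---essentially forces you to maintain the iaVD (or, equivalently, the boundary of the union, which is a level set of $D_{\min}$ and carries the same combinatorial information).
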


\begin{proof}
The correctness is given as follows.

(1) If the decision procedure returns ``$F^{max}_1(U,W)>\delta$",
then there exists a realization of $U$ and $W$ which makes it
impossible to reach the last matching pair $M(n,j^*(n))$. The
argument is similar to Lemma~\ref{lemma:monotone} and omitted here.
That means $F^{max}_1(U,W)>\delta$.

(2)  If the decision procedure returns ``$F^{max}_1(U,W)\leq
\delta$", then $S$ has $n$ elements, i.e., $S=\{M(1,j^*(1))$,
$M(2,j^*(2))$, $M(3,j^*(2))$,...,$M(n,j^*(n))\}$, where $j^*(i) \leq
j^*(i+1)$. We claim that there exists  monotone matching pair set
$\{M(i,j(i))| 1 \leq i\leq n\}$  ($j(i) \leq j^*(i)$) under any
realization of $U$ and $W$, where $j(i)$ is an index of W and $u_i$
match $w_{j(i)}$ and $j(i)\leq j(i+1) $.

We prove the claim by an induction on $i$.

(2.1) Basis: When $i=1$, if all the matching pairs $M(1,j)$, $j \leq
j^*(1)$, are not reachable, then there exists a realization $b_x, 1
\leq x \leq j^{*}(1)$, and $a_1$ which satisfy $d(a_1,b_x)>\delta$.
Then $D(1,1,j^{*}(1)) > \delta$, and we have a contradiction, that
means there exist $j(1)\leq j^{*}(1)$ such that $M(1,j(1))$ is
possible under any realization.

(2.2) Inductive hypothesis:  We assume that the claim holds when
$i=l$.

(2.3) Inductive step: Now we consider the case when $i=l+1$. By the
inductive hypothesis, there exists monotone matching set
$M(1,j(1)),M(2,j(2)),...,M(l,j(l))$, $j(i) \leq j^*(i)$ under any
realization of $U(1,l), W(1,j)$. As $D(l+1,j,j^*(l+1)) \leq
D(l+1,j^*(l),j^*(l+1)) \leq \delta$, there exists a matching pair
$M(l+1,j(l+1))$, where $j(l+1) \leq j^*(l+1)$, which is reachable by
jumping directly from $M(l,j(l))$ under any realization of
$U(l+1,l+1)$ and $W(j+1,j(l+1))$. Hence, if the decision procedure
returns ``$F^{max}_1(H,Q)\leq \delta$", then $F^{max}_1(H,Q)\leq
\delta$.

\vskip 6pt

We now compute the time it takes to find a smallest $k$ ($j \leq
k\leq m$) satisfying $D(i,j,k) \leq \delta$. The steps to compute
$D(i,j,k)$ can be done as follows.

(I) We compute the inverted additive Voronoi
Diagram~\cite{loffler2009} (iaVD) of imprecise vertices
$w_j,w_{j+1},...,w_{k}$ modeled as disks (may have different sizes),
which takes $O((k-j)\log (k-j))$ time.

(II) If the imprecise region $u_i$ intersects the boundary of iaVD,
then some vertex of the partial boundary within $u_i$ would be the
realization of $u_i$ in computing $D(i,j,k)$. Otherwise, the
imprecise region $u_i$ is located in the cell controlled by some
site $w_x$. Then, the diameter of the region $u_i\cup w_x$ would be
$D(i,j,k)$. This step takes $O(k-i)$ time.

As we need to construct the inverted additive Voronoi Diagram
incrementally, each single insertion takes $O(s)$ time, where $s$ is
the size of the iaVD. Hence the total time to find a smallest $k$
($j \leq k\leq m$) is $\sum_{j\leq x\leq k} (x-j)=O((k-j)^2)$.
Therefore, the total time complexity is $\sum_{1 \leq i \leq n}
(j^*(i+1)-j^*(i))^2=O(m^2+n)$. \qed
\end{proof}

For the optimization problem, we again use a dynamic programming
algorithm to solve it. The algorithm is similar to that in
Theorem~\ref{theorem:dynamic} and the difference is to use
$D(i,j,k)$ instead of $\delta_{i,j}$. The recurrence relation is as
follows.

\[F^{\max}_1(i+1,k)\\
=\left\{\begin{array}{ll}
\min_{1\leq j\leq k}\{\max\{F^{\max}_1(i,j),D(i+1,j,k)\}\}, & \mbox{$i> 0$}\\
D(1,1,k), & \mbox{$i=0$}
\end{array}
\right. \]

It seems that the dynamic programming algorithm takes $O(nm^2)$ time
after all the distances $D(i,j,k)$ are calculated in $O(nm^3)$ time.
However, we can use the Monge property to speed up the computation
of dynamic programming, we only need to compute $O(nm)$ distances
$D(i,j,k)$'s in $O(nm^2)$ time.

(1) $F^{\max}_1(i,j)$ is a monotone decreasing function when $j$
increases for a fixed $i$.

(2) $D(i+1,j,k)$ is a monotone increasing function when $j$
increases for fixed $i$ and $k$.

(3) Let $j_k$ denote the index satisfying $F^{\max}_1(i+1,k)=\max
\{F^{\max}_1(i,j_k),D(i+1,j_k,k)\}$, and $j_{k+1}$ denote the index
satisfying $F^{\max}_1(i+1,k+1)=$ $\max \{F^{\max}_1(i,j_{k+1})$,
$D(i+1,j_{k+1},k+1)\}$,  then $j_{k+1} \geq j_{k}$.

Hence we only need to try distances $D(i+1,j_{k},k+1),
D(i+1,j_{k}+1,k+1), D(i+1,j_{k}+2,k+1),...,D(i+1,j_{k+1},k+1),
D(i+1,j_{k+1}+1,k+1)$ when computing $F^{\max}_1(i+1,k+1)$ for a
fixed $i$ and $k$, namely $(j_{k+1}-j_{k}+2)$ distances, hence the
total number of distance is $O(m)$ for a fixed $i$.

Hence $F^{\max}_1(i+1,k)$ ($1 \leq k \leq m$) can be calculated in
$O(m)$ time after the distances $D(i+1,j,k)$ ($1 \leq k \leq m$) are
calculated for a fixed $i$. We then only need to try $O(m)$
distances $D(i+1,j,k)$ ($1 \leq k \leq m$): the update of iaVD needs
at most $O(m)$ insert operations, $O(m)$ deletion operations, and
$O(m)$ query operations, each takes at most $O(m)$ time. Hence the
total time is $O(m^2)$ for a fixed $i$. Hence we have the theorem
below.

\begin{theorem}
In $\reals^2$, given two imprecise sequences $U$ and $W$ of size $n$
and $m$ respectively, where each imprecise vertex is modeled as a
disk, $F^{max}_1(U,W)$ can be computed in $O(nm^2)$ time.
\end{theorem}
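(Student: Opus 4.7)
The plan is to combine the dynamic programming recurrence
\[
F^{\max}_1(i+1,k)=\min_{1\leq j\leq k}\max\{F^{\max}_1(i,j),\,D(i+1,j,k)\},
\]
with base case $F^{\max}_1(1,k)=D(1,1,k)$, together with a Monge-type monotonicity of the minimizer and an incremental maintenance of the inverted additive Voronoi diagram, so that each row costs $O(m^2)$ and the total is $O(nm^2)$. Correctness of the recurrence follows from the shortcut-match structure already exploited in the decision lemma: any optimal shortcut matching of $U(1,i+1)$ against $W(1,k)$ either matches $u_{i+1}$ to $w_k$ for the first time (contributing $D(i+1,j,k)$ on top of $F^{\max}_1(i,j)$, where $j$ is the index reached by the prefix) or matches $u_{i+1}$ to some earlier $w_{j'}$ with $j'<k$ and shortcuts past $w_k$, which is already captured by the $\min$.

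First I would verify the three monotonicity properties listed before the theorem. Property (1), that $F^{\max}_1(i,j)$ is non-increasing in $j$, holds because enlarging the suffix of $W$ available for shortcutting can only decrease the worst-case matching cost. Property (2), that $D(i+1,j,k)$ is non-decreasing in $j$ for fixed $k$, follows directly from the definition of $D_{\min}$: a smaller candidate set $\{w_j,\dots,w_k\}$ yields a larger minimum. Property (3), that the optimal index $j_k$ is non-decreasing in $k$, is then a standard exchange argument: if $j_{k+1}<j_k$, properties (1) and (2) give a contradiction with the optimality of either $j_k$ or $j_{k+1}$, since a decreasing function is being balanced against an increasing function.

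With property (3) in hand, the key algorithmic step is a sliding-window sweep. For each fixed $i$, as $k$ grows from $1$ to $m$ the minimizer $j_k$ only moves forward, so the total number of index pairs $(j,k)$ at which I must evaluate $D(i+1,j,k)$ is $O(m)$ rather than $O(m^2)$. To evaluate these queries I maintain the inverted additive Voronoi diagram of the current window $\{w_j,\dots,w_k\}$ incrementally, exactly as in the preceding lemma: across one row there are $O(m)$ insertions (as $k$ advances) and $O(m)$ deletions (as $j$ advances), each costing $O(m)$ on a diagram of size $O(m)$, and each $D(i+1,j,k)$ query is answered in $O(m)$ by intersecting $u_{i+1}$ with the current diagram. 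The per-row cost is therefore $O(m^2)$, and summing over $i=1,\dots,n-1$ gives $O(nm^2)$.

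The main obstacle is the rigorous justification of the monotone-minimizer property (3), since it is what turns the naive inner minimum of size $O(m)$ into an amortized $O(1)$ step across all $k$. A secondary technical point is that the disks in $W$ may have different radii, but the inverted additive Voronoi diagram of \cite{loffler2009} is defined precisely in this additively weighted setting, so the per-operation cost of $O(m)$ used above is justified. Once these ingredients are in place, the $O(nm^2)$ bound follows immediately from the per-row accounting.
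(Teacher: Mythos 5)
Your proposal is correct and follows essentially the same route as the paper: the same recurrence $F^{\max}_1(i+1,k)=\min_{1\leq j\leq k}\max\{F^{\max}_1(i,j),D(i+1,j,k)\}$, the same three monotonicity (Monge-type) properties yielding a monotone minimizer $j_k$ so that only $O(m)$ values $D(i+1,j,k)$ are needed per row, and the same incremental maintenance of the inverted additive Voronoi diagram with $O(m)$ insertions, deletions, and queries at $O(m)$ cost each, giving $O(m^2)$ per row and $O(nm^2)$ in total. The extra justifications you supply for properties (1)--(3) are consistent with, and slightly more detailed than, what the paper asserts.
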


We comment that when both $U$ and $W$ are imprecise, our algorithm
could still work in $\reals^d$. But due to the high cost (like
constructing the $d$-dimensional Voronoi diagram), the algorithm
then becomes impractical. Hence, we only focus on the problem in
$\reals^2$ for this case.

\section{Concluding remarks}

In this paper, we consider the problem of computing the discrete
Fr\'{e}chet distance of imprecise input. We  address the open
problem posed by Ahn~\cite{AhnIsaac2010,Ahn2012}  \emph{et al.} a
few years ago, and show that the discrete Fr\'{e}chet distance upper
bound problem of imprecise data is NP-hard. And our NP-hardness
proof is quite complicate, the construction has a combinatorial and
a geometric part. In the combinatorial part, we interpret the
imprecise discrete distance in terms of finding monotone paths
through a colored grid graph; In the geometric part, we show that
the relevant colored free space diagram grids can be realized
geometrically.
 Given two imprecise
vertex sequence $U,W$ (each vertex modeled as a $d$-dimensional
ball), we show that the upper bound of the discrete Fr\'{e}chet
distance between $U$ and $W$ can be computed in polynomial time if
allowing shortcuts on one side. It would be interesting to consider
these problems under the continuous Fr\'{e}chet distance.

\section{Acknowledgments}
CF's research is supported by the Benjamin PhD Fellowship. BZ's
research is supported by the Open Fund of Top Key Discipline of
Computer Software and Theory in Zhejiang Provincial Colleges at
Zhejiang Normal University.

\bibliographystyle{abbrv}

\newpage
\section{Appendix}

\setcounter{theorem}{1}

\subsection{Proof of Lemma 1}
\begin{proof}
Let $\phi$ be a Boolean formula in conjunctive normal form with $n$
variables $x_1$, $x_2$, $\ldots,x_n$ and $m$ clauses $C_1$, $C_2$,
$\ldots,C_m$, each of size at most three. We take the following
steps to construct an instance $G$ of ISCPCS.

For each Boolean variable $x_i,\overline{x_i}$ in $\phi$, we use two
vertices with the same color (and different variables always use
different colors), one denoted as $x_i$, the other denoted as
$\overline{x_i}$. See $x_1$, $x_2$, and $x_3$ in
Figure~\ref{fig:Np_proof} for example. Eventually, we have to pick
one of the two vertices to retain this color. One represents that
the variable $x_i$ is assigned the \textbf{True} value, and the
other corresponds to the value \textbf{False}.

For each clause $C_i$ in $\phi$, we construct three vertices with
the same color (which has never used before). (We use shapes instead
of colors in Figure~\ref{fig:Np_proof} to emphasize the difference
with variables; for example, we use three triangular vertices in
Figure~\ref{fig:Np_proof} to denote clause $(\overline{x_1}\vee
x_2\vee \overline{x_3})$). We then add directed edges between the
clause vertices and variable vertices, the rule is as follows: let
vertex $p_i$ be the vertex used to denote variable $\overline{x_i}$
(resp. $x_i$), and vertex $c_{i,j}$ be the vertex used to denote the
clause $C_j$ which contains $x_i$ (resp. $\overline{x_i}$), then we
add a directed edge from $c_{i,j}$ to $p_i$, see
Figure~\ref{fig:Np_proof}.

At last we add edges from the source  $s$ to each vertex denoting a
clause, and add edges from each vertex denoting a variable to the
destination $t$. It is easy to ensure that there is no crossing
between the edges: all the vertices denoting variables are arranged
from left to right, there is a clause vertices for each literal
(variable vertex), and each of the clause vertices connecting to a
fixed variable vertex $x_i$ is just below the variable vertices
$x_i$. Let the resulting directed geometric graph be $D$. We next
complete the proof by proving that $\phi$ is satisfiable iff there
exists an induced subgraph $G_s$ such that in $G_s$ there is no path
from $s$ to $t$.

`$\rightarrow$' If $\phi$ is satisfiable with some truth assignment,
then there is at least one true literal in each clause $C_i$. We
show how to compute the induced subgraph $G_s$ from $G$ as follows.
For each pair of variable vertices representing
$\{x_j,\overline{x_j}\}$, we pick one which is assigned {\bf True}.
Let $C_i=u \vee v \vee w$, where $u,v,w$ are literals in the form of
$x_j$ or $\overline{x_j}$. Then we have three clause vetices
$c_{u,i}, c_{v,i}$ and $c_{w,i}$, of the same color $color_i$,
representing the clause $C_i$. WLOG,  just suppose that $u$ is a
true literal in $C_i$ (pick any one true literal if there exist more
than one literal in $C_i$ be true), and we choose the vertex
$c_{u,i}$ to cover the color $color_i$. By construction, in $D$,
there is no edge from the vertex $c_{u,i}$ to the vertex $p_u$
representing $u$.
Hence, there is no  path from $s$ to $t$ crossing the clause vertex
$c_{u,i}$ (representing clause $C_i$). As this holds for all clauses
and any path  from $s$ to $t$ has to pass a clause vertex, hence
there exists an induced subgraph $G_s$ consist of exactly one vertex
of each color with no path from $s$ to $t$.

`$\leftarrow$' If there exists an induced subgraph $G_s$ consist of
exactly one vertex of each color with no path from $s$ to $t$, we
need to prove that $\phi$ is satisfiable. Suppose to the contrary
that $\phi$ is not satisfiable, then at least one clause is not
satisfiable. Let this clause be $C_i=u \vee v \vee w$, and let the
clause vertices $c_{u,i}$, $c_{v,i}$ and $c_{w,i}$ connect to the
variable vertices $p_u$, $p_v$ and $p_w$ in $D$ which correspond to
the variables $\overline{u}$, $\overline{v}$, $\overline{w}$
respectively. As $u$, $v$ and $w$ are all false, $p_u$, $p_v$, $p_w$
are all picked in the induced subgraph. Then, there exists a path
from $s$ to $t$ passing through $c_{u,i}$, $c_{v,i}$ or $c_{w,i}$,
as one of them must be picked. A contradiction!

Hence, $\phi$ is satisfiable if and only if there exists an induced
subgraph $G_s$ consist of exactly one vertex of each color  with no
path from $s$ to $t$. The reduction obviously takes $O(n+m)$ time.
\qed
\end{proof}

\subsection{Details for realizing the grid graph geometrically}
The details for realizing the grid graph are given here. First, we
create the points used for determining the position of the vertices
in $H$ and $Q$. Let $\theta$ be satisfying that
 $\max\{m,n\}*\theta \leq \pi/20$.
Let $N=\max\{m,n\}$, and WLOG, let $N$ be even. We construct a
circle $C(O,r)$, where $O=(0, \epsilon/2)$ and $r=\epsilon/2$.

\begin{figure}
 \centering
 \includegraphics[width=0.5\textwidth]{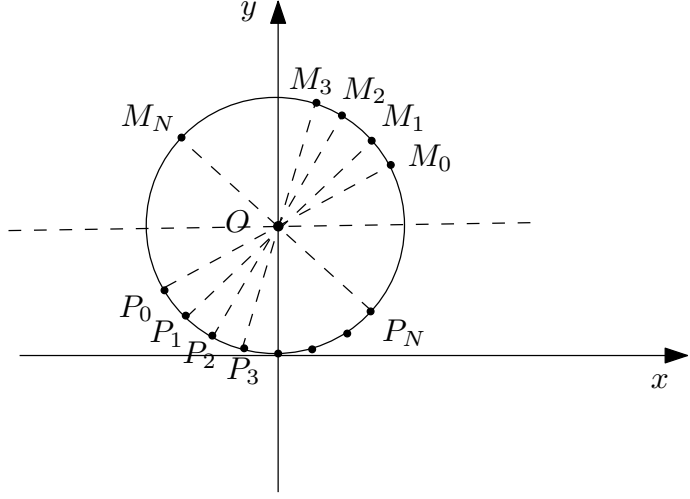}
 \caption{Illustration of the position of points $P_{i}$'s and $M_{i}$'s.}
  \label{fig:point}
\end{figure}

We  construct a sequence of points $P_i$ ($i=0,1,2,...,N$) on the
lower half of circle $C(O,r)$ in counterclockwise order, and the
point $P_{N/2}$ overlaps with point $(0,0)$, see
Figure~\ref{fig:point}. Note that the distance between two adjacent
points $P_{i}$ and $P_{i+1}$ is
$L=2\frac{\epsilon}{2}\sin\frac{\theta}{2}=\epsilon\sin\frac{\theta}{2}$,
and $\angle P_iOP_{i+1}=\theta$, for $i=0,1,2,...,N-1$. Hence, all
the points $P_i,i=0,1,2,...,N$ are within a region of diameter less
than $\frac{\pi}{20}\cdot\frac{\epsilon}{2}<\epsilon/10$. (We
comment that in Figure~\ref{fig:point}, these points are spread out
much more than they should be, as we need the space for putting the
labels.)

We then construct a sequence of points $M_i$ ($i=0,1,2,...,N$) on
the upper half of circle $C(O,r)$ in counterclockwise order. Each
line $\overline{P_iM_i}$ crosses the center of $C(O,r)$; namely
$M_i$ is the symmetry of the point $P_i$ about point $O$.

It is obvious that $d(M_{i},P_i)=\epsilon$ and
$d(M_{i},P_j)<\epsilon, i\neq j$. Recall that $D(P_i,\epsilon')$ is
the neighborhood (disk) centered at $P_i$ with radius be
$\epsilon'$. Here, we have $\epsilon'=\frac{1}{2}
\min\{\epsilon-d(P_i,M_j),i\neq j\}$; moreover,
$d(p,q)\leq(\epsilon-2\epsilon')+2\epsilon'\leq \epsilon$, for $p\in
D(P_i,\epsilon')$, $q\in D(M_j,\epsilon')$, and $i \neq j$.

Let $P'_i$ be the symmetry of the point $P_i$ along the horizontal
line $y=-3\epsilon/4$. Let $M'_i$ be the symmetry of the point $M_i$
along the horizontal line $y=-3\epsilon/4$. We finish the steps of
our construction in order as follows.

\begin{figure}
 \centering
 \includegraphics[width=0.4\textwidth]{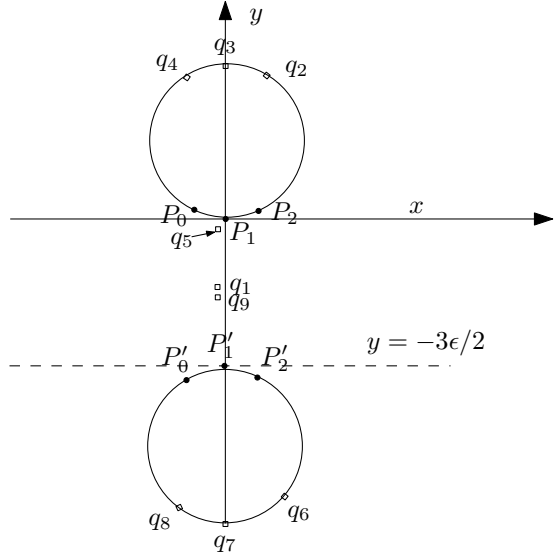}
 \caption{Illustration of the deployment of vertices of $P_i$'s and $Q$, corresponding to Figure~\ref{fig:spread_fff}.}
  \label{fig:compose}
\end{figure}

\begin{enumerate}
\item The imprecise vertices $\{q_2,...,q_{m+1}\}$ used to construct
the clause gadget are deployed in the upper half of the  circle
$C((0,\frac{\epsilon}{2}),\frac{\epsilon}{2})$. An example is given
in Figure~\ref{fig:compose} (see $q_2,q_3,q_4$ there). For an
imprecise vertex $q_i, 2 \leq i \leq m+1$, three points $h_{i_1},
h_{i_2}, h_{i_3} $ are located in $D(P_{i-2},\epsilon')$, $q_i$ is
located in $D(M_{i-2},\epsilon')$, and the three circles
$C(h_{i_1},\epsilon), C(h_{i_2},\epsilon), C(h_{i_3},\epsilon)$
cover $q_i$ as in Figure~\ref{fig:location}. The clause gadget is
constructed as follows. The point $h_{i_2}$ overlaps with $P_{i}$,
$h_{i_1}$ is located to the left of line $\overline{P_{i-2}M_{i-2}}$
with a distance $\epsilon'/3$ to $h_{i_2}$, and $h_{i_3}$ is located
to the right of line $\overline{P_{i-2}M_{i-2}}$ with a distance
$\epsilon'/3$ to $h_{i_2}$. The points $h_{i_1},h_{i_2},h_{i_3}$ are
located on the same line perpendicular to line
$\overline{P_{i-2}M_{i-2}}$. The three intersections between
$C(h_{i_1},\epsilon), C(h_{i_2},\epsilon), C(h_{i_3},\epsilon)$ are
$s_1,s_2,s_3$ from left to right about the horizontal line
$\overline{h_{i_1}h_{i_3}}$. Let $q_i$ be the rectangle with length
$2 \epsilon'/3$ and width $\epsilon''$, the upper long side of $q_i$
crosses $s_1,s_3$ and is symmetric along the line $P_{i-2}M_{i-2}$,
and the lower long side of $q_i$ crosses $s_2$.
$\epsilon''<d(s_2,M_{i-2})=\epsilon-\sqrt{\epsilon^2-(\epsilon'/3)^2}<\epsilon'/3$.
Hence $d(M_{i-2},q)<\epsilon'$ when $q\in q_i$.

The imprecise vertices $\{q_j| 2 \leq j \leq m+1, j \neq i\}$ are
fully covered by $C(h_{i_1},\epsilon)$, $C(h_{i_2},\epsilon)$,
$C(h_{i_3},\epsilon)$ as $d(p,q)<\epsilon$, $p\in D(P_i,\epsilon'),
q\in D(M_j,\epsilon'),i \neq j$. The above design can ensure that,
in the corresponding free space grid graph, either one of the three
cells $C[i,i_1],C[i,i_2], C[i,i_3]$ can be passed by a potential
monotone path, and the cells $C[i,j_1], C[i,j_2], C[i,j_3], j\neq
i,$ can also be passed ($h_{j_1}, h_{j_2}, h_{j_3}$ and  $q_j$, $j
\neq i$, are used to construct another clause gadget), while the
rest of cells in the $i$-th row cannot be passed.

\item The imprecise vertices $\{q_{m+3},...,q_{n+m+2}\}$ used to
construct the variable gadgets are deployed in the lower half of
another circle $C((0,-2\epsilon),\frac{\epsilon}{2})$. For an
example, see Figure~\ref{fig:compose}. For an imprecise vertex $q_i,
m+3 \leq i \leq n+m+2$, two points $h_{i_1}, h_{i_2}$ are located in
$D(P'_{i-m-3},\epsilon')$, $q_i$ is located in
$D(M'_{i-m-3},\epsilon')$ and the long side of  $q_i$ is parallel to
the tangent line at $M'_{i-n-3}$. Two circles $C(h_{i_1},\epsilon),
C(h_{i_2,\epsilon})$ cover $q_i$ as in
Figure~\ref{fig:spread_variable}(a) to construct a variable gadget.
The other imprecise vertices $\{q_j| m+3 \leq j \leq n+m+2, j \neq
i\}$ are fully covered by $C(h_{i_1},\epsilon),
C(h_{i_2},\epsilon)$, the precise location is similar to the
construction in Figure~\ref{fig:location}. This design can ensure
that, in the corresponding free space grid graph, one of the cells
$C(i,i_1),C(i,i_2)$ can be passed by a potential monotone path, and
the cells $C(i,j_1), C(i,j_2),j\neq i$ can also be passed ($h_{j_1},
h_{j_2}$ and $q_j$ are used to construct another variable gadget),
but the other cells in the $i$-th row cannot be passed.

\item The first imprecise vertex $q_1$ and last imprecise vertex
$q_{n+m+3}$ are deployed in $D((0,-3\epsilon/4),\epsilon')$, which
are fully covered by all the circles $C(p,\epsilon), p\in \bigcup
D(P_i,\epsilon')$ or $p\in \bigcup D(P'_i,\epsilon')$, as
$(\frac{\epsilon}{10}+\frac{3\epsilon}{4}+2\epsilon')<\epsilon$.
This design can ensure that, in the corresponding free space grid
graph, all the cells in the first row and last row can be passed by
a potential monotone path.

\item The imprecise vertex $Q_{m+2}$ is deployed  inside region $D((0,0),\epsilon')$, which is only fully
covered by any circle $C(p,\epsilon), p\in \bigcup D(P_i,\epsilon')$
. But it is not covered by the circle $C(p,\epsilon), p\in \bigcup
D(P'_i,\epsilon')$. This design above can ensure that, in the
corresponding free space grid graph, all the cells
$C[m+2,i_1],C[m+2,i_2],C[m+2,i_3]$ in the $m+2$-th row can be passed
($h_{i_1}, h_{i_2}, h_{i_3}$ are used to construct the clause gadget
with $q_i$), while the other cells in the $(m+2)$-th row cannot be
passed.

\item At last, we adjust the order and rename for the vertices in $H$. If
there is a variable cell in the $j$-th row of the free space grid
graph, and there are a total of $k$ clause cells connecting to it,
say the row number of  $k$ cells are $1',2',...,i',...,k'$
respectively, then we choose one point (never be renamed before)
from the three points $h_{i'_1},h_{i'_2},h_{i'_3}$ for each fixed
$i'$, and rename those $k$ points as $h_{j-k}, h_{j-k+1},...,
h_{j-1}$ in order.
\end{enumerate}

\begin{figure}
 \centering
 \includegraphics[width=0.6\textwidth]{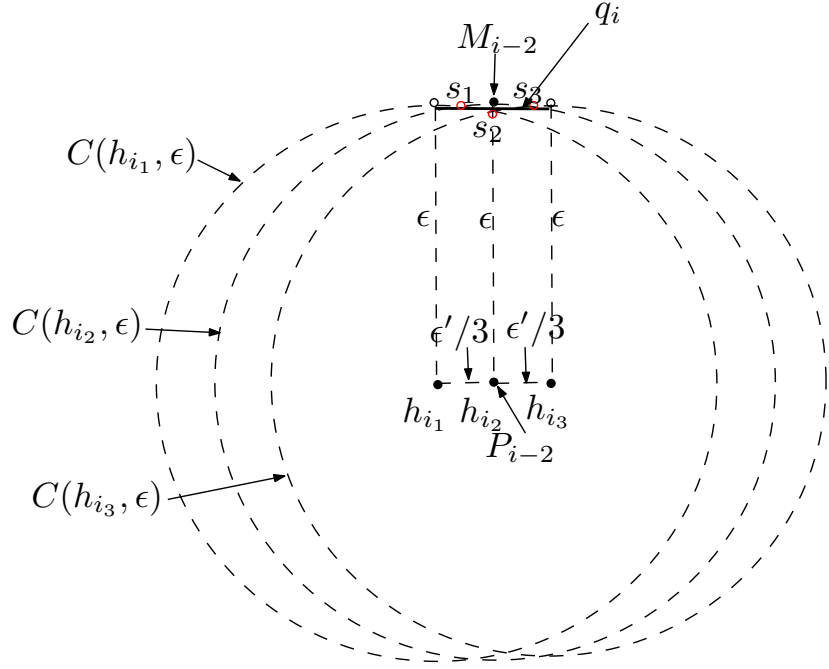}
 \caption{Illustration of the precise location of $h_{i_1},h_{i_2},h_{i_3}$ and $q_i$.}
  \label{fig:location}
\end{figure}

\end{document}